\documentclass[conference]{IEEEtran}

\usepackage{cite}
\usepackage{graphicx,amssymb,amstext,amsmath,equation,dsfont}
\usepackage{bm}
\usepackage{setspace}
\usepackage{stfloats}
\usepackage{flushend}
\usepackage{url}
\usepackage{array}
\usepackage{mdwmath}
\usepackage{mdwtab}
\usepackage{epsfig}
\usepackage{fixltx2e}
\usepackage{booktabs}
\usepackage[EULERGREEK]{sansmath}
\usepackage{pstricks}
\usepackage{exscale}
\usepackage[T1]{fontenc} 
\usepackage{psfrag}
\usepackage{subfig}

\newtheorem{theorem}{Theorem}

\newtheorem{corollary}{Corollary}
\newtheorem{remark}{Remark}

\newcommand{\yv}{{\bm y}}

\newcommand{\yrv}{\bm Y}

\newcommand{\xv}{{\bm x}}

\newcommand{\xrv}{\bm X}

\newcommand{\zrv}{\bm Z}

\newcommand{\HM}{{\mathsf H}}

\newcommand{\HRM}{\mathbb{H}}

\newcommand{\ERM}{\mathbb{E}}

\newcommand{\nt}{{n_{\rm t}}}
\newcommand{\nr}{{n_{\rm r}}}
\newcommand{\nts}{{n_{{\rm t},s}}}
\newcommand{\ntone}{{n_{{\rm t},1}}}
\newcommand{\nttwo}{{n_{{\rm t},2}}}
\newcommand{\SNR}{{\sf SNR}}

\newcommand{\gmi}{I^{\rm gmi}}

\newcommand{\trans}[1]{#1^{\textnormal{\textsf{\tiny T}}}} 
%
\ifCLASSINFOpdf
\else
\fi

\DeclareMathAlphabet{\mathpzc}{OT1}{pzc}{m}{it}

\DeclareSymbolFont{lettersA}{U}{txmia}{m}{it}
 \DeclareMathSymbol{\real}{\mathord}{lettersA}{"92}
 \DeclareMathSymbol{\field}{\mathord}{lettersA}{"83} 
 \DeclareMathSymbol{\integ}{\mathord}{lettersA}{"9A}

\newcommand{\hermi}[1]{#1^{\dagger}} 

\allowdisplaybreaks[4]

\hyphenation{op-tical net-works semi-conduc-tor}

\begin{document}
\title{Nearest Neighbour Decoding with \\Pilot-Assisted Channel Estimation for \\Fading Multiple-Access Channels}

\author{\IEEEauthorblockN{A. Taufiq Asyhari, Tobias Koch and Albert Guill{\'e}n i F{\`a}bregas}
\IEEEauthorblockA{University of Cambridge, Cambridge CB2 1PZ, UK\\Email:  taufiq-a@ieee.org, tobi.koch@eng.cam.ac.uk, guillen@ieee.org}}


\maketitle

\begin{abstract}
We study a noncoherent multiple-input multiple-output (MIMO) fading multiple-access channel (MAC), where the transmitters and the receiver are aware of the statistics of the fading, but not of its realisation. We analyse the rate region that is achievable with nearest neighbour decoding and pilot-assisted channel estimation and determine the corresponding pre-log region, which is defined as the limiting ratio of the rate region to the logarithm of the signal-to-noise ratio (SNR) as the SNR tends to infinity.
\renewcommand{\thefootnote}{}
\footnotetext{The work of A. T. Asyhari has been partly supported by the Yousef Jameel Scholarship at University of Cambridge. The work of T. Koch has received funding from the European's Seventh Framework Programme (FP7/2007--2013) under grant agreement No.\ 252663.}
\setcounter{footnote}{0}
\end{abstract}



\IEEEpeerreviewmaketitle

\graphicspath{{Figure/EPS/}{Figure/}}

\section{Introduction and Channel Model}
\label{sec:intro}

We study a two-user multiple-input multiple-output (MIMO) fading multiple-access channel (MAC), where two terminals wish to communicate with a third one, and where the channels between the terminals are MIMO fading channels. We consider a noncoherent channel model, where all terminals are aware of the statistics of the fading, but not of its realisation. We are interested in the achievable-rate region that can be achieved with nearest neighbour decoding and pilot-assisted channel estimation. We focus on the high signal-to-noise ratio (SNR) regime. In particular, we study the pre-log region, defined as the limiting ratio of the achievable-rate region to $\log\SNR$ as the $\SNR$ tends to infinity.

The pre-log of point-to-point MIMO fading channels achievable with nearest neighbour decoding and pilot-assisted channel estimation was studied in \cite{IEEE:asyhari_etal:nearest_neighbour_ISIT11}. It was demonstrated that it coincides with the capacity pre-log (defined as the limiting ratio of \emph{capacity} to $\log\SNR$ as the $\SNR$ tends to infinity) for multiple-input single-output (MISO) fading channels, derived by Koch and Lapidoth \cite{IEEE:koch:fadingnumber_degreeoffreedom}, and that it achieves the best so far known lower bound on the pre-log of MIMO fading channels, derived by Etkin and Tse \cite{IEEE:etkin:degreeofffreedomMIMO}.

In this paper, we extend the analysis in \cite{IEEE:asyhari_etal:nearest_neighbour_ISIT11} to the two-user MIMO fading MAC where the first user has $\ntone$ antennas, the second user has $\nttwo$ antennas and the receiver has $\nr$ antennas. The channel model is depicted in Fig.~\ref{fig:mac-system-model}. The channel output at time instant $k \in \integ$ (where $\integ$ denotes the set of integers) is a complex-valued $\nr$-dimensional random vector given by 
\begin{equation}
\label{eq:channel}
 \yrv_k = \sqrt{{\sf SNR}}\, \HRM_{1,k} \xv_{1,k} + \sqrt{{\sf SNR}}\, \HRM_{2,k} \xv_{2,k} + \zrv_k.
\end{equation}
Here  $\xv_{s,k} \in \field^{\nts}$ denotes the time-$k$ channel input vector corresponding to user $s$, $s=1,2$ (with $\field$ denoting the set of complex numbers); $\HRM_{s,k} \in \field^{\nr \times \nts}$ denotes the fading matrix at time $k$ corresponding to user $s$, $s=1,2$; $\SNR$ denotes the average SNR for each transmit antenna; and $\zrv_k \in \field^{\nr}$ denotes the additive noise vector at time $k$. 

\begin{figure}[t]
\begin{center}
\begin{psfrags}
 \psfrag{a}[c]{Tx}
 \psfrag{b}[c]{\footnotesize $s=1$}
 \psfrag{c}[c]{\footnotesize $s=2$}
 \psfrag{d}[c]{Rx}
 \psfrag{e}[c]{$m_1$}
 \psfrag{f}[c]{$m_2$}
 \psfrag{g}[c]{$\left(\hat m_1, \hat m_2 \right)$}
 \includegraphics[width=0.43\textwidth]{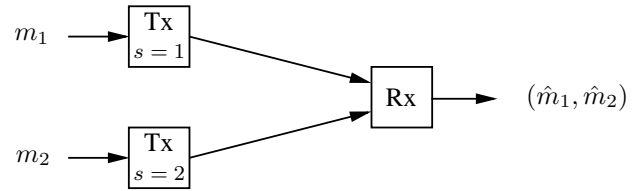}
 \vspace*{-2mm}
\end{psfrags}
\end{center}
\caption{The two-user MAC system model.}
\label{fig:mac-system-model}
\vspace*{-5mm}
\end{figure}

The noise process $\{\zrv_k, k \in \integ \}$ is a sequence of independent and identically distributed (i.i.d.) complex Gaussian random vectors with zero mean and covariance matrix ${\sf I}_{\nr}$, where ${\sf I}_{\nr}$ is the $\nr \times \nr$ identity matrix. 

The fading processes $\{ \HRM_{s,k}, k \in \integ \}$, $s = 1,2$ are stationary, ergodic and Gaussian. We assume that the $(\ntone \cdot\nr + \nttwo \cdot \nr)$ processes $\{H_{s,k}(r,t), k \in \integ \}$, $s=1,2$, $r=1,\ldots,\nr$, $t=1,\ldots,\nts$ are independent and have the same law, with each process having zero-mean, unit-variance and power spectral density $f_H (\lambda)$, $-\frac{1}{2} \leq \lambda \leq \frac{1}{2}$. Thus, $f_H(\cdot)$ is a non-negative function satisfying
 \begin{equation}
 \mathsf{E} \left[ H_{s,k+m} (r,t) \hermi{H}_{s,k} (r,t) \right] = \int^{1/2}_{-1/2} e^{i2\pi m \lambda} f_H (\lambda) d \lambda
\end{equation}
where $\hermi{(\cdot)}$ denotes complex conjugation. We further assume that the power spectral density $f_H(\cdot)$ has bandwidth $\lambda_D<1/2$, i.e., $f_H (\lambda) = 0$ for $|\lambda| > \lambda_D$ and $f_H(\lambda)>0$ otherwise.

We finally assume that the fading processes $\{\HRM_{s,k}, k\in\integ\}$, $s=1,2$ and the noise process $\{\zrv_k, k \in \integ \}$ are independent and that their joint law does not depend on $\{\xv_{s,k}, k\in\integ\}$, $s=1,2$. We consider a noncoherent channel model, where the transmitters and the receiver are aware of the statistics of $\{\HRM_{s,k}, k\in\integ\}$, $s=1,2$, but not of their realisations.

\section{Transmission Scheme}
\label{sec:trans_scheme}
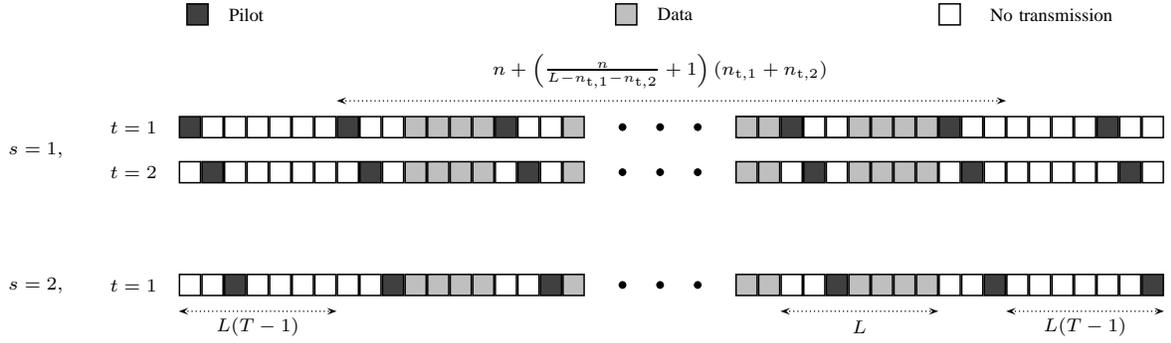
\begin{figure*}[t]
 \begin{center}
\begin{pspicture}(-5,-2.65)(12,1.8)

\psframe[linewidth=0.02,fillstyle=solid,fillcolor=darkgray](-2,1.5)(-1.7,1.8)
\rput(-1.2,1.65){\scriptsize Pilot}

\psframe[linewidth=0.02,fillstyle=solid,fillcolor=lightgray](3.7,1.5)(4,1.8)
\rput(4.5,1.65){\scriptsize Data}

\psframe[linewidth=0.02](8.0,1.5)(8.3,1.8)
\rput(9.5,1.65){\scriptsize No transmission}

\rput(4.3,0.9){\scriptsize $n + \left(\frac{n}{L - \ntone - \nttwo} + 1 \right) (\ntone + \nttwo)$}
\psline[linewidth=0.02,linestyle=dotted,dotsep=1pt]{<->}(0,0.5)(8.9,0.5)

\rput(6.95,-2.5){\scriptsize $L$}
\psline[linewidth=0.02,linestyle=dotted,dotsep=1pt]{<->}(5.9,-2.3)(8,-2.3)

\rput(9.95,-2.5){\scriptsize $L(T-1)$}
\psline[linewidth=0.02,linestyle=dotted,dotsep=1pt]{<->}(8.9,-2.3)(11,-2.3)

\rput(-1.05,-2.5){\scriptsize $L(T-1)$}
\psline[linewidth=0.02,linestyle=dotted,dotsep=1pt]{<->}(-2.1,-2.3)(0,-2.3)

\rput(-4,-0.15){\scriptsize $s=1,$}

{
\rput(-2.7,0.15){\scriptsize $t=1$}

\psframe[linewidth=0.02,fillstyle=solid,fillcolor=darkgray](-2.1,0)(-1.8,0.3)
\psframe[linewidth=0.02](-1.8,0)(-1.5,0.3)
\psframe[linewidth=0.02](-1.5,0)(-1.2,0.3)
\psframe[linewidth=0.02](-1.2,0)(-0.9,0.3)
\psframe[linewidth=0.02](-0.9,0)(-0.6,0.3)
\psframe[linewidth=0.02](-0.6,0)(-0.3,0.3)
\psframe[linewidth=0.02](-0.3,0)(0,0.3)

\psframe[linewidth=0.02,fillstyle=solid,fillcolor=darkgray](0,0)(0.3,0.3)
\psframe[linewidth=0.02](0.3,0)(0.6,0.3) 
\psframe[linewidth=0.02](0.6,0)(0.9,0.3)
\psframe[linewidth=0.02,fillstyle=solid,fillcolor=lightgray](0.9,0)(1.2,0.3)
\psframe[linewidth=0.02,fillstyle=solid,fillcolor=lightgray](1.2,0)(1.5,0.3)
\psframe[linewidth=0.02,fillstyle=solid,fillcolor=lightgray](1.5,0)(1.8,0.3)
\psframe[linewidth=0.02,fillstyle=solid,fillcolor=lightgray](1.8,0)(2.1,0.3)

\psframe[linewidth=0.02,fillstyle=solid,fillcolor=darkgray](2.1,0)(2.4,0.3)
\psframe[linewidth=0.02](2.4,0)(2.7,0.3) 
\psframe[linewidth=0.02](2.7,0)(3,0.3)
\psframe[linewidth=0.02,fillstyle=solid,fillcolor=lightgray](3,0)(3.3,0.3)

\pscircle[fillstyle=solid,fillcolor=black,linewidth=0.0875,linecolor=white](3.8,0.15){0.14}
\pscircle[fillstyle=solid,fillcolor=black,linewidth=0.0875,linecolor=white](4.3,0.15){0.14}
\pscircle[fillstyle=solid,fillcolor=black,linewidth=0.0875,linecolor=white](4.8,0.15){0.14}

\psframe[linewidth=0.02](5.3,0)(8.6,0.3)
\psframe[linewidth=0.02,fillstyle=solid,fillcolor=lightgray](5.3,0)(5.6,0.3)
\psframe[linewidth=0.02,fillstyle=solid,fillcolor=lightgray](5.6,0)(5.9,0.3)
\psframe[linewidth=0.02,fillstyle=solid,fillcolor=darkgray](5.9,0)(6.2,0.3)
\psframe[linewidth=0.02](6.2,0)(6.5,0.3)
\psframe[linewidth=0.02](6.5,0)(6.8,0.3)
\psframe[linewidth=0.02,fillstyle=solid,fillcolor=lightgray](6.8,0)(7.1,0.3)
\psframe[linewidth=0.02,fillstyle=solid,fillcolor=lightgray](7.1,0)(7.4,0.3)
\psframe[linewidth=0.02,fillstyle=solid,fillcolor=lightgray](7.4,0)(7.7,0.3)
\psframe[linewidth=0.02,fillstyle=solid,fillcolor=lightgray](7.7,0)(8,0.3)

\psframe[linewidth=0.02,fillstyle=solid,fillcolor=darkgray](8,0)(8.3,0.3)
\psframe[linewidth=0.02](8.3,0)(8.6,0.3)
\psframe[linewidth=0.02](8.6,0)(8.9,0.3)
\psframe[linewidth=0.02](8.9,0)(9.2,0.3)
\psframe[linewidth=0.02](9.2,0)(9.5,0.3)
\psframe[linewidth=0.02](9.5,0)(9.8,0.3)
\psframe[linewidth=0.02](9.8,0)(10.1,0.3)

\psframe[linewidth=0.02,fillstyle=solid,fillcolor=darkgray](10.1,0)(10.4,0.3)
\psframe[linewidth=0.02](10.4,0)(10.7,0.3)
\psframe[linewidth=0.02](10.7,0)(11,0.3)

}

{
\rput(-2.7,-0.45){\scriptsize $t=2$}

\psframe[linewidth=0.02](-2.1,-0.6)(-1.8,-0.3)
\psframe[linewidth=0.02,fillstyle=solid,fillcolor=darkgray](-1.5,-0.6)(-1.8,-0.3)
\psframe[linewidth=0.02](-1.5,-0.6)(-1.2,-0.3) 
\psframe[linewidth=0.02](-1.2,-0.6)(-0.9,-0.3)
\psframe[linewidth=0.02](-0.9,-0.6)(-0.6,-0.3) 
\psframe[linewidth=0.02](-0.6,-0.6)(-0.3,-0.3)
\psframe[linewidth=0.02](-0.3,-0.6)(0,-0.3) 

\psframe[linewidth=0.02](0,-0.6)(0.3,-0.3)
\psframe[linewidth=0.02,fillstyle=solid,fillcolor=darkgray](0.3,-0.6)(0.6,-0.3)
\psframe[linewidth=0.02](0.6,-0.6)(0.9,-0.3)
\psframe[linewidth=0.02,fillstyle=solid,fillcolor=lightgray](0.9,-0.6)(1.2,-0.3)
\psframe[linewidth=0.02,fillstyle=solid,fillcolor=lightgray](1.2,-0.6)(1.5,-0.3)
\psframe[linewidth=0.02,fillstyle=solid,fillcolor=lightgray](1.5,-0.6)(1.8,-0.3)
\psframe[linewidth=0.02,fillstyle=solid,fillcolor=lightgray](1.8,-0.6)(2.1,-0.3)

\psframe[linewidth=0.02](2.1,-0.6)(2.4,-0.3)
\psframe[linewidth=0.02,fillstyle=solid,fillcolor=darkgray](2.4,-0.6)(2.7,-0.3)
\psframe[linewidth=0.02](2.7,-0.6)(3,-0.3)
\psframe[linewidth=0.02,fillstyle=solid,fillcolor=lightgray](3,-0.6)(3.3,-0.3)

\pscircle[fillstyle=solid,fillcolor=black,linewidth=0.0875,linecolor=white](3.8,-0.45){0.14}
\pscircle[fillstyle=solid,fillcolor=black,linewidth=0.0875,linecolor=white](4.3,-0.45){0.14}
\pscircle[fillstyle=solid,fillcolor=black,linewidth=0.0875,linecolor=white](4.8,-0.45){0.14}

\psframe[linewidth=0.02,fillstyle=solid,fillcolor=lightgray](5.3,-0.6)(5.6,-0.3)
\psframe[linewidth=0.02,fillstyle=solid,fillcolor=lightgray](5.6,-0.6)(5.9,-0.3)
\psframe[linewidth=0.02](5.9,-0.6)(6.2,-0.3)
\psframe[linewidth=0.02,fillstyle=solid,fillcolor=darkgray](6.2,-0.6)(6.5,-0.3)

\psframe[linewidth=0.02](6.5,-0.6)(6.8,-0.3)
\psframe[linewidth=0.02,fillstyle=solid,fillcolor=lightgray](6.8,-0.6)(7.1,-0.3)
\psframe[linewidth=0.02,fillstyle=solid,fillcolor=lightgray](7.1,-0.6)(7.4,-0.3)
\psframe[linewidth=0.02,fillstyle=solid,fillcolor=lightgray](7.4,-0.6)(7.7,-0.3)
\psframe[linewidth=0.02,fillstyle=solid,fillcolor=lightgray](7.7,-0.6)(8,-0.3)

\psframe[linewidth=0.02](8,-0.6)(8.3,-0.3)
\psframe[linewidth=0.02,fillstyle=solid,fillcolor=darkgray](8.3,-0.6)(8.6,-0.3)
\psframe[linewidth=0.02](8.6,-0.6)(8.9,-0.3)
\psframe[linewidth=0.02](8.9,-0.6)(9.2,-0.3)
\psframe[linewidth=0.02](9.2,-0.6)(9.5,-0.3)
\psframe[linewidth=0.02](9.5,-0.6)(9.8,-0.3)
\psframe[linewidth=0.02](9.8,-0.6)(10.1,-0.3)

\psframe[linewidth=0.02](10.1,-0.6)(10.4,-0.3)
\psframe[linewidth=0.02,fillstyle=solid,fillcolor=darkgray](10.7,-0.6)(10.4,-0.3)
\psframe[linewidth=0.02](11,-0.6)(10.7,-0.3)

}

{
\rput(-4,-1.95){\scriptsize $s=2,$}

\rput(-2.7,-1.95){\scriptsize $t=1$}

\psframe[linewidth=0.02](-2.1,-2.1)(-1.8,-1.8)
\psframe[linewidth=0.02](-1.5,-2.1)(-1.8,-1.8)
\psframe[linewidth=0.02,fillstyle=solid,fillcolor=darkgray](-1.5,-2.1)(-1.2,-1.8) 
\psframe[linewidth=0.02](-1.2,-2.1)(-0.9,-1.8)
\psframe[linewidth=0.02](-0.9,-2.1)(-0.6,-1.8) 
\psframe[linewidth=0.02](-0.6,-2.1)(-0.3,-1.8)
\psframe[linewidth=0.02](-0.3,-2.1)(0,-1.8) 

\psframe[linewidth=0.02](0,-2.1)(0.3,-1.8)
\psframe[linewidth=0.02](0.3,-2.1)(0.6,-1.8)
\psframe[linewidth=0.02,fillstyle=solid,fillcolor=darkgray](0.6,-2.1)(0.9,-1.8)
\psframe[linewidth=0.02,fillstyle=solid,fillcolor=lightgray](0.9,-2.1)(1.2,-1.8)
\psframe[linewidth=0.02,fillstyle=solid,fillcolor=lightgray](1.2,-2.1)(1.5,-1.8)
\psframe[linewidth=0.02,fillstyle=solid,fillcolor=lightgray](1.5,-2.1)(1.8,-1.8)
\psframe[linewidth=0.02,fillstyle=solid,fillcolor=lightgray](1.8,-2.1)(2.1,-1.8)

\psframe[linewidth=0.02](2.1,-2.1)(2.4,-1.8)
\psframe[linewidth=0.02](2.4,-2.1)(2.7,-1.8)
\psframe[linewidth=0.02,fillstyle=solid,fillcolor=darkgray](2.7,-2.1)(3,-1.8)
\psframe[linewidth=0.02,fillstyle=solid,fillcolor=lightgray](3,-2.1)(3.3,-1.8)

\pscircle[fillstyle=solid,fillcolor=black,linewidth=0.0875,linecolor=white](3.8,-1.95){0.14}
\pscircle[fillstyle=solid,fillcolor=black,linewidth=0.0875,linecolor=white](4.3,-1.95){0.14}
\pscircle[fillstyle=solid,fillcolor=black,linewidth=0.0875,linecolor=white](4.8,-1.95){0.14}

\psframe[linewidth=0.02,fillstyle=solid,fillcolor=lightgray](5.3,-2.1)(5.6,-1.8)
\psframe[linewidth=0.02,fillstyle=solid,fillcolor=lightgray](5.6,-2.1)(5.9,-1.8)
\psframe[linewidth=0.02](5.9,-2.1)(6.2,-1.8)
\psframe[linewidth=0.02](6.2,-2.1)(6.5,-1.8)

\psframe[linewidth=0.02,fillstyle=solid,fillcolor=darkgray](6.5,-2.1)(6.8,-1.8)
\psframe[linewidth=0.02,fillstyle=solid,fillcolor=lightgray](6.8,-2.1)(7.1,-1.8)
\psframe[linewidth=0.02,fillstyle=solid,fillcolor=lightgray](7.1,-2.1)(7.4,-1.8)
\psframe[linewidth=0.02,fillstyle=solid,fillcolor=lightgray](7.4,-2.1)(7.7,-1.8)
\psframe[linewidth=0.02,fillstyle=solid,fillcolor=lightgray](7.7,-2.1)(8,-1.8)

\psframe[linewidth=0.02](8,-2.1)(8.3,-1.8)
\psframe[linewidth=0.02](8.3,-2.1)(8.6,-1.8)
\psframe[linewidth=0.02,fillstyle=solid,fillcolor=darkgray](8.6,-2.1)(8.9,-1.8)
\psframe[linewidth=0.02](8.9,-2.1)(9.2,-1.8)
\psframe[linewidth=0.02](9.2,-2.1)(9.5,-1.8)
\psframe[linewidth=0.02](9.5,-2.1)(9.8,-1.8)
\psframe[linewidth=0.02](9.8,-2.1)(10.1,-1.8)

\psframe[linewidth=0.02](10.1,-2.1)(10.4,-1.8)
\psframe[linewidth=0.02](10.7,-2.1)(10.4,-1.8)
\psframe[linewidth=0.02,fillstyle=solid,fillcolor=darkgray](11,-2.1)(10.7,-1.8)

}

\end{pspicture}
\end{center}
 \caption{Structure of joint-transmission scheme, $\ntone = 2$, $\nttwo = 1$, $L = 7$ and $T=2$.}
 \label{fig:pilot_data_illustration_JTD}
 \end{figure*}

Both users transmit codewords and pilot symbols over the channel \eqref{eq:channel}. Codewords are used to convey the messages, and pilot symbols are used to facilitate the estimation of the fading coefficients at the receiver. To transmit the message $m_s \in \{1,\dotsc,e^{nR_s} \}$, $s=1,2$, each user's encoder selects a codeword of length $n$ from a codebook $\mathcal{C}_s$, where $\mathcal{C}_s$, $s=1,2$ are drawn i.i.d.\ from an $\nts$-variate, zero-mean, complex Gaussian distribution of covariance matrix ${\sf I}_{\nts}$.

Orthogonal pilot vectors are used to estimate the fading matrices for both users. The pilot vector ${\bm p}_{s,t} \in \field^{\nts}$, $s=1,2$, $t=1,\dotsc,\nts$ used to estimate the fading coefficients from transmit antenna $t$ of user $s$ is given by $p_{s,t} (t) = 1$ and $p_{s,t} (t') = 0$ for $t' \neq t$. For example, the first pilot vector of user $s$ is given by $\trans{(1,0,\dotsc,0)}$, where $\trans{(\cdot)}$ denotes the transpose. To estimate the fading matrices $\HRM_{1,k}$ and $\HRM_{2,k}$, each training period requires the $(\ntone + \nttwo)$ pilot vectors ${\bm p}_{1,1},\dotsc, {\bm p}_{1,\ntone}, {\bm p}_{2,1},\dotsc, {\bm p}_{2,\nttwo}$.

The transmission scheme for the two-user setup extends the scheme used for the single-user setup in \cite{IEEE:asyhari_etal:nearest_neighbour_ISIT11}. We assume that the transmission from both users is synchronised. Every $L$ time instants (for some $L \geq \ntone + \nttwo,~ L\in\integ$), user 1 first transmits the $\ntone$ pilot vectors ${\bm p}_{1,1},\dotsc, {\bm p}_{1,\ntone}$. Once the transmission of the $\ntone$ pilot vectors is finished, the user 2 transmits its $\nttwo$ pilot vectors ${\bm p}_{2,1},\dotsc, {\bm p}_{2,\nttwo}$.  The codeword for each user is then split up into blocks of $(L- \ntone -\nttwo)$ data vectors, which will be transmitted after the $(\ntone + \nttwo)$ pilot vectors. The process of transmitting $(L- \ntone -\nttwo)$ data vectors and $(\ntone + \nttwo)$ pilot vectors continues until all $n$ data symbols are completed. Herein we assume that $n$ is an integer multiple of $(L - \ntone - \nttwo)$.\footnote{If $n$ is not an integer multiple of $(L-\ntone-\nttwo)$, then the last $(L-\ntone-\nttwo)$ instants are not fully used by data vectors and contain therefore time instants where we do not transmit anything. The thereby incurred loss in information rate vanishes as $n$ tends to infinity.} Prior to transmitting the first data block, and after transmitting the last data block, we introduce a guard period of $L (T -1)$ time instants (for some $T\in\integ$), where we transmit every $L$ time instants the $(\ntone + \nttwo)$ pilot vectors but we do not transmit data vectors in between. The guard period ensures that, at every time instant, we can employ a channel estimator that bases its estimation on the channel outputs corresponding to the $T$ past and the $T$ future pilot transmissions. This facilitates the analysis but does not incur a loss in performance. The above transmission scheme is illustrated in Fig. \ref{fig:pilot_data_illustration_JTD}. The channel estimator is described below. 

Note that the total block-length of the above transmission scheme (comprising data vectors, pilot vectors and guard period) is given by
\begin{equation}
 n' = n_{\rm p} + n + n_{\rm g} \label{eq:total_length}
\end{equation}
where $n_{\rm p}$ denotes the number of channel uses for pilot symbols, and where $n_{\rm g}$ denotes the number of channel uses during the guard period, i.e.,
\begin{eqnarray}
 n_{\rm p} &=& \left(\frac{n}{L - \ntone - \nttwo} + 1  + 2 (T-1) \right) (\ntone + \nttwo) \IEEEeqnarraynumspace\\
 n_{\rm g} &=& 2(L- \ntone - \nttwo)(T-1).
\end{eqnarray}
 
Once the transmission is completed, the decoder guesses which message has been transmitted. The decoder consists of two parts: a \emph{channel estimator} and a \emph{data detector}. The channel estimator observes the channel output ${\yrv}_k$, $k\in\mathcal{P}$ corresponding to the past and future $T$ pilot transmissions and estimates $H_{s,k}(r,t)$ using a linear interpolator, i.e., the estimate $\hat{H}_{s,k}^{(T)}(r,t)$ of the fading coefficient $H_{s,k}(r,t)$ is given by
\begin{equation}
\label{eq:LMMSEestimation}
 \hat H_{s,k}^{(T)}(r,t) = \sum^{k + T L}_{\substack{ k' = k - TL:\\ k' \in \mathcal{P}}} a_{s,k'}  (r,t) Y_{k'}(r)
\end{equation}
where the coefficients $a_{s,k'} (r,t)$ are chosen in order to minimise the mean-squared error. Here $\mathcal{P}$ denotes the set of time indices where pilot symbols are transmitted, and $\mathcal{D}$ denotes the set of time indices where data vectors of a codeword are transmitted.

Note that, since the pilot symbols are transmitted only from one user and one antenna at a time, the fading coefficients corresponding to all transmit and receive antennas from both users can be observed. Further note that, since the fading processes $\{H_{s,k}(r,t), k \in \integ \}$, $s=1,2$, $r=1,\ldots,\nr$, $t=1,\ldots,\nts$ are independent, estimating $H_{s,k}(r,t)$ only based on $\{Y_{k}(r),k\in\integ\}$ rather than on $\{\yrv_{k},k\in\integ\}$ incurs no loss in optimality.

 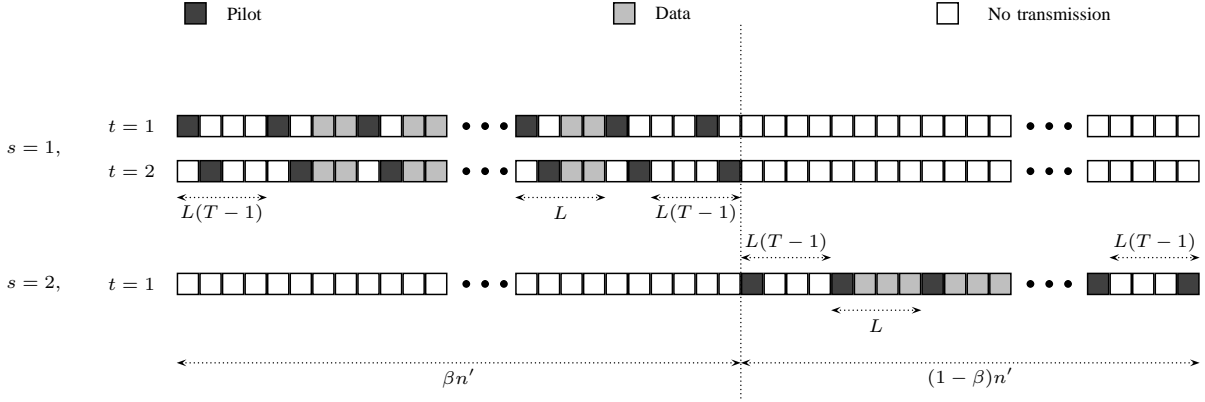
\begin{figure*}[t]
 \begin{center}
\begin{pspicture}(-5,-3.65)(12,2)

\psframe[linewidth=0.02,fillstyle=solid,fillcolor=darkgray](-2,1.5)(-1.7,1.8)
\rput(-1.2,1.65){\scriptsize Pilot}

\psframe[linewidth=0.02,fillstyle=solid,fillcolor=lightgray](3.7,1.5)(4,1.8)
\rput(4.5,1.65){\scriptsize Data}

\psframe[linewidth=0.02](8.0,1.5)(8.3,1.8)
\rput(9.5,1.65){\scriptsize No transmission}


\rput(-1.5,-1){\scriptsize $L(T-1)$}
\psline[linewidth=0.02,linestyle=dotted,dotsep=1pt]{<->}(-2.1,-0.8)(-0.9,-0.8)

\rput(3,-1){\scriptsize $L$}
\psline[linewidth=0.02,linestyle=dotted,dotsep=1pt]{<->}(2.4,-0.8)(3.6,-0.8)

\rput(4.8,-1){\scriptsize $L(T-1)$}
\psline[linewidth=0.02,linestyle=dotted,dotsep=1pt]{<->}(4.2,-0.8)(5.4,-0.8)

\rput(6,-1.4){\scriptsize $L(T-1)$}
\psline[linewidth=0.02,linestyle=dotted,dotsep=1pt]{<->}(5.4,-1.6)(6.6,-1.6)

\rput(7.2,-2.5){\scriptsize $L$}
\psline[linewidth=0.02,linestyle=dotted,dotsep=1pt]{<->}(6.6,-2.3)(7.8,-2.3)

\rput(10.9,-1.4){\scriptsize $L(T-1)$}
\psline[linewidth=0.02,linestyle=dotted,dotsep=1pt]{<->}(10.3,-1.6)(11.5,-1.6)

\psline[linewidth=0.02,linestyle=dotted,dotsep=1pt](5.4,1.5)(5.4,-3.5)

\rput(1.65,-3.2){\scriptsize $\beta n'$} 
\psline[linewidth=0.02,linestyle=dotted,dotsep=1pt]{<->}(-2.1,-3)(5.4,-3)

\rput(8.45,-3.2){\scriptsize $(1 - \beta) n'$} 
\psline[linewidth=0.02,linestyle=dotted,dotsep=1pt]{<->}(5.4,-3)(11.5,-3)

\rput(-4,-0.15){\scriptsize $s=1,$}

{
\rput(-2.7,0.15){\scriptsize $t=1$}

\psframe[linewidth=0.02,fillstyle=solid,fillcolor=darkgray](-2.1,0)(-1.8,0.3)
\psframe[linewidth=0.02](-1.8,0)(-1.5,0.3)
\psframe[linewidth=0.02](-1.5,0)(-1.2,0.3)
\psframe[linewidth=0.02](-1.2,0)(-0.9,0.3)
\psframe[linewidth=0.02,fillstyle=solid,fillcolor=darkgray](-0.9,0)(-0.6,0.3)
\psframe[linewidth=0.02](-0.6,0)(-0.3,0.3)
\psframe[linewidth=0.02,fillstyle=solid,fillcolor=lightgray](-0.3,0)(0,0.3)
\psframe[linewidth=0.02,fillstyle=solid,fillcolor=lightgray](0,0)(0.3,0.3)
\psframe[linewidth=0.02,fillstyle=solid,fillcolor=darkgray](0.3,0)(0.6,0.3)
\psframe[linewidth=0.02](0.6,0)(0.9,0.3)
\psframe[linewidth=0.02,fillstyle=solid,fillcolor=lightgray](0.9,0)(1.2,0.3)
\psframe[linewidth=0.02,fillstyle=solid,fillcolor=lightgray](1.2,0)(1.5,0.3)

\pscircle[fillstyle=solid,fillcolor=black,linewidth=0.0875,linecolor=white](1.75,0.15){0.14}
\pscircle[fillstyle=solid,fillcolor=black,linewidth=0.0875,linecolor=white](2,0.15){0.14}
\pscircle[fillstyle=solid,fillcolor=black,linewidth=0.0875,linecolor=white](2.25,0.15){0.14}

\psframe[linewidth=0.02,fillstyle=solid,fillcolor=darkgray](2.4,0)(2.7,0.3) 
\psframe[linewidth=0.02](2.7,0)(3,0.3)
\psframe[linewidth=0.02,fillstyle=solid,fillcolor=lightgray](3,0)(3.3,0.3)
\psframe[linewidth=0.02,fillstyle=solid,fillcolor=lightgray](3.3,0)(3.6,0.3) 
\psframe[linewidth=0.02,fillstyle=solid,fillcolor=darkgray](3.6,0)(3.9,0.3)
\psframe[linewidth=0.02](3.9,0)(4.2,0.3)
\psframe[linewidth=0.02](4.2,0)(4.5,0.3)
\psframe[linewidth=0.02](4.5,0)(4.8,0.3)
\psframe[linewidth=0.02,fillstyle=solid,fillcolor=darkgray](4.8,0)(5.1,0.3)
\psframe[linewidth=0.02](5.1,0)(5.4,0.3)
\psframe[linewidth=0.02](5.4,0)(5.7,0.3)
\psframe[linewidth=0.02](5.7,0)(6,0.3)
\psframe[linewidth=0.02](6,0)(6.3,0.3)
\psframe[linewidth=0.02](6.3,0)(6.6,0.3)
\psframe[linewidth=0.02](6.6,0)(6.9,0.3)
\psframe[linewidth=0.02](6.9,0)(7.2,0.3)
\psframe[linewidth=0.02](7.2,0)(7.5,0.3)
\psframe[linewidth=0.02](7.5,0)(7.8,0.3)
\psframe[linewidth=0.02](7.8,0)(8.1,0.3)
\psframe[linewidth=0.02](8.1,0)(8.4,0.3)
\psframe[linewidth=0.02](8.4,0)(8.7,0.3)
\psframe[linewidth=0.02](8.7,0)(9,0.3)

\pscircle[fillstyle=solid,fillcolor=black,linewidth=0.0875,linecolor=white](9.25,0.15){0.14}
\pscircle[fillstyle=solid,fillcolor=black,linewidth=0.0875,linecolor=white](9.5,0.15){0.14}
\pscircle[fillstyle=solid,fillcolor=black,linewidth=0.0875,linecolor=white](9.75,0.15){0.14}

\psframe[linewidth=0.02](10.0,0)(10.3,0.3)
\psframe[linewidth=0.02](10.3,0)(10.6,0.3)
\psframe[linewidth=0.02](10.6,0)(10.9,0.3)
\psframe[linewidth=0.02](10.9,0)(11.2,0.3)
\psframe[linewidth=0.02](11.2,0)(11.5,0.3)

}

{
\rput(-2.7,-0.45){\scriptsize $t=2$}

\psframe[linewidth=0.02](-2.1,-0.6)(-1.8,-0.3)
\psframe[linewidth=0.02,fillstyle=solid,fillcolor=darkgray](-1.5,-0.6)(-1.8,-0.3)
\psframe[linewidth=0.02](-1.5,-0.6)(-1.2,-0.3) 
\psframe[linewidth=0.02](-1.2,-0.6)(-0.9,-0.3)
\psframe[linewidth=0.02](-0.9,-0.6)(-0.6,-0.3) 
\psframe[linewidth=0.02,fillstyle=solid,fillcolor=darkgray](-0.6,-0.6)(-0.3,-0.3)
\psframe[linewidth=0.02,fillstyle=solid,fillcolor=lightgray](-0.3,-0.6)(0,-0.3) 
\psframe[linewidth=0.02,fillstyle=solid,fillcolor=lightgray](0,-0.6)(0.3,-0.3)
\psframe[linewidth=0.02](0.3,-0.6)(0.6,-0.3)
\psframe[linewidth=0.02,fillstyle=solid,fillcolor=darkgray](0.6,-0.6)(0.9,-0.3)
\psframe[linewidth=0.02,fillstyle=solid,fillcolor=lightgray](0.9,-0.6)(1.2,-0.3)
\psframe[linewidth=0.02,fillstyle=solid,fillcolor=lightgray](1.2,-0.6)(1.5,-0.3)

\pscircle[fillstyle=solid,fillcolor=black,linewidth=0.0875,linecolor=white](1.75,-0.45){0.14}
\pscircle[fillstyle=solid,fillcolor=black,linewidth=0.0875,linecolor=white](2,-0.45){0.14}
\pscircle[fillstyle=solid,fillcolor=black,linewidth=0.0875,linecolor=white](2.25,-0.45){0.14}

\psframe[linewidth=0.02](2.4,-0.6)(2.7,-0.3)
\psframe[linewidth=0.02,fillstyle=solid,fillcolor=darkgray](2.7,-0.6)(3,-0.3)
\psframe[linewidth=0.02,fillstyle=solid,fillcolor=lightgray](3,-0.6)(3.3,-0.3)
\psframe[linewidth=0.02,fillstyle=solid,fillcolor=lightgray](3.3,-0.6)(3.6,-0.3)
\psframe[linewidth=0.02](3.6,-0.6)(3.9,-0.3)
\psframe[linewidth=0.02,fillstyle=solid,fillcolor=darkgray](3.9,-0.6)(4.2,-0.3)
\psframe[linewidth=0.02](4.2,-0.6)(4.5,-0.3)
\psframe[linewidth=0.02](4.5,-0.6)(4.8,-0.3)
\psframe[linewidth=0.02](4.8,-0.6)(5.1,-0.3)
\psframe[linewidth=0.02,fillstyle=solid,fillcolor=darkgray](5.1,-0.6)(5.4,-0.3)
\psframe[linewidth=0.02](5.4,-0.6)(5.7,-0.3)
\psframe[linewidth=0.02](5.7,-0.6)(6,-0.3)
\psframe[linewidth=0.02](6,-0.6)(6.3,-0.3)
\psframe[linewidth=0.02](6.3,-0.6)(6.6,-0.3)
\psframe[linewidth=0.02](6.6,-0.6)(6.9,-0.3)
\psframe[linewidth=0.02](6.9,-0.6)(7.2,-0.3)
\psframe[linewidth=0.02](7.2,-0.6)(7.5,-0.3)
\psframe[linewidth=0.02](7.5,-0.6)(7.8,-0.3)
\psframe[linewidth=0.02](7.8,-0.6)(8.1,-0.3)
\psframe[linewidth=0.02](8.1,-0.6)(8.4,-0.3)
\psframe[linewidth=0.02](8.4,-0.6)(8.7,-0.3)
\psframe[linewidth=0.02](8.7,-0.6)(9,-0.3)

\pscircle[fillstyle=solid,fillcolor=black,linewidth=0.0875,linecolor=white](9.25,-0.45){0.14}
\pscircle[fillstyle=solid,fillcolor=black,linewidth=0.0875,linecolor=white](9.5,-0.45){0.14}
\pscircle[fillstyle=solid,fillcolor=black,linewidth=0.0875,linecolor=white](9.75,-0.45){0.14}

\psframe[linewidth=0.02](10,-0.6)(10.3,-0.3)
\psframe[linewidth=0.02](10.3,-0.6)(10.6,-0.3)
\psframe[linewidth=0.02](10.6,-0.6)(10.9,-0.3)
\psframe[linewidth=0.02](10.9,-0.6)(11.2,-0.3)
\psframe[linewidth=0.02](11.2,-0.6)(11.5,-0.3)

}

{
\rput(-4,-1.95){\scriptsize $s=2,$}

\rput(-2.7,-1.95){\scriptsize $t=1$}

\psframe[linewidth=0.02](-2.1,-2.1)(-1.8,-1.8)
\psframe[linewidth=0.02](-1.5,-2.1)(-1.8,-1.8)
\psframe[linewidth=0.02](-1.5,-2.1)(-1.2,-1.8) 
\psframe[linewidth=0.02](-1.2,-2.1)(-0.9,-1.8)
\psframe[linewidth=0.02](-0.9,-2.1)(-0.6,-1.8) 
\psframe[linewidth=0.02](-0.6,-2.1)(-0.3,-1.8)
\psframe[linewidth=0.02](-0.3,-2.1)(0,-1.8) 

\psframe[linewidth=0.02](0,-2.1)(0.3,-1.8)
\psframe[linewidth=0.02](0.3,-2.1)(0.6,-1.8)
\psframe[linewidth=0.02](0.6,-2.1)(0.9,-1.8)
\psframe[linewidth=0.02](0.9,-2.1)(1.2,-1.8)
\psframe[linewidth=0.02](1.2,-2.1)(1.5,-1.8)

\pscircle[fillstyle=solid,fillcolor=black,linewidth=0.0875,linecolor=white](1.75,-1.95){0.14}
\pscircle[fillstyle=solid,fillcolor=black,linewidth=0.0875,linecolor=white](2,-1.95){0.14}
\pscircle[fillstyle=solid,fillcolor=black,linewidth=0.0875,linecolor=white](2.25,-1.95){0.14}

\psframe[linewidth=0.02](2.4,-2.1)(2.7,-1.8)
\psframe[linewidth=0.02](2.7,-2.1)(3,-1.8)
\psframe[linewidth=0.02](3,-2.1)(3.3,-1.8)

\psframe[linewidth=0.02](3.3,-2.1)(3.6,-1.8)
\psframe[linewidth=0.02](3.6,-2.1)(3.9,-1.8)
\psframe[linewidth=0.02](3.9,-2.1)(4.2,-1.8)
\psframe[linewidth=0.02](4.2,-2.1)(4.5,-1.8)
\psframe[linewidth=0.02](4.5,-2.1)(4.8,-1.8)
\psframe[linewidth=0.02](4.8,-2.1)(5.1,-1.8)
\psframe[linewidth=0.02](5.1,-2.1)(5.4,-1.8)
\psframe[linewidth=0.02,fillstyle=solid,fillcolor=darkgray](5.4,-2.1)(5.7,-1.8)
\psframe[linewidth=0.02](5.7,-2.1)(6,-1.8)

\psframe[linewidth=0.02](6,-2.1)(6.3,-1.8)
\psframe[linewidth=0.02](6.3,-2.1)(6.6,-1.8)
\psframe[linewidth=0.02,fillstyle=solid,fillcolor=darkgray](6.6,-2.1)(6.9,-1.8)
\psframe[linewidth=0.02,fillstyle=solid,fillcolor=lightgray](6.9,-2.1)(7.2,-1.8)
\psframe[linewidth=0.02,fillstyle=solid,fillcolor=lightgray](7.2,-2.1)(7.5,-1.8)
\psframe[linewidth=0.02,fillstyle=solid,fillcolor=lightgray](7.5,-2.1)(7.8,-1.8)
\psframe[linewidth=0.02,fillstyle=solid,fillcolor=darkgray](7.8,-2.1)(8.1,-1.8)
\psframe[linewidth=0.02,fillstyle=solid,fillcolor=lightgray](8.1,-2.1)(8.4,-1.8)
\psframe[linewidth=0.02,fillstyle=solid,fillcolor=lightgray](8.4,-2.1)(8.7,-1.8)
\psframe[linewidth=0.02,fillstyle=solid,fillcolor=lightgray](8.7,-2.1)(9,-1.8)

\pscircle[fillstyle=solid,fillcolor=black,linewidth=0.0875,linecolor=white](9.25,-1.95){0.14}
\pscircle[fillstyle=solid,fillcolor=black,linewidth=0.0875,linecolor=white](9.5,-1.95){0.14}
\pscircle[fillstyle=solid,fillcolor=black,linewidth=0.0875,linecolor=white](9.75,-1.95){0.14}

\psframe[linewidth=0.02,fillstyle=solid,fillcolor=darkgray](10,-2.1)(10.3,-1.8)
\psframe[linewidth=0.02](10.3,-2.1)(10.6,-1.8)
\psframe[linewidth=0.02](10.6,-2.1)(10.9,-1.8)
\psframe[linewidth=0.02](10.9,-2.1)(11.2,-1.8)
\psframe[linewidth=0.02,fillstyle=solid,fillcolor=darkgray](11.2,-2.1)(11.5,-1.8)

}

\end{pspicture}
\end{center}
 \caption{Structure of TDMA scheme, $\ntone = 2$, $\nttwo = 1$, $L = 4$ and $T=2$.}
 \label{fig:pilot_data_illustration_TDMA}
 \end{figure*}

Since the time-lags between $\HRM_{s,k}$, $k\in\mathcal{D}$ and the observations $\yrv_{k'}$, $k'\in\mathcal{P}$ depend on $k$, it follows that the interpolation error
\begin{equation}
E_{s,k}^{(T)}(r,t) = H_{s,k}(r,t) - \hat H_{s,k}^{(T)}(r,t) \label{eq:Ert}
\end{equation}
is not stationary but cyclo-stationary with period $L$. Nevertheless, it can be shown that, irrespective of $s$, $r$ and $t$, the interpolation error
\begin{equation}
\epsilon^2_{s,T}(\ell,r,t) = \mathsf{E}\left[\left|H_{s,k}(r,t)-\hat{H}^{(T)}_{s,k}(r,t)\right|^2\right]
\end{equation}
tends to the following expression as $T$ tends to infinity \cite{IEEE:ohno:averageratePSAM}
\begin{eqnarray}
 \epsilon^2 (\ell) & \triangleq & \lim_{T \rightarrow \infty} \epsilon^2_{s,T}(\ell,r,t) \\ 
 &=& 1 -  \int^{1/2}_{-1/2} \frac{\SNR \left | f_{H_L,\ell}(\lambda) \right |^2}{\SNR f_{H_L,0} (\lambda) + 1} d \lambda \label{eq:interpolationerror}
\end{eqnarray}
where $\ell= k\mod L$ denotes the remainder of $k/L$. Here $f_{H_L,\ell}(\cdot)$ is given by
\begin{equation}
 f_{H_{L},\ell} (\lambda) = \frac{1}{L} \sum^{L-1}_{j=0} \bar{f}_H \left( \frac{\lambda - j}{L} \right)e^{i2\pi \ell \frac{\lambda - j}{L} }
\end{equation}
 and $\bar{f}_H(\cdot)$ is the periodic function of period $[-1/2,1/2)$ that coincides with $f_H(\lambda)$ for $-1/2\leq\lambda\leq 1/2$. Furthermore, if
\begin{equation}
 L \leq \frac{1}{2 \lambda_D}  \label{eq:nyquist}
\end{equation}
then $|f_{H_L,\ell}(\cdot)|$ becomes
\begin{equation}
|f_{H_L,\ell}(\lambda)| = f_{H_L,0}(\lambda)=\frac{1}{L} f_H\left(\frac{\lambda}{L}\right), \quad\!\! -\frac{1}{2}\leq\lambda\leq\frac{1}{2}.
\end{equation}
In this case the interpolation error \eqref{eq:interpolationerror} becomes
\begin{IEEEeqnarray}{rCll}
 \epsilon^2  &=& 1 -  \int^{1/2}_{-1/2} &\frac{\SNR \left(f_{H}(\lambda) \right)^2}{\SNR f_{H} (\lambda) + L} d \lambda\label{eq:fadingestimate_error}
\end{IEEEeqnarray}
which does not depend on $\ell$ and vanishes as the $\SNR$ tends to infinity. Recall that $\lambda_D$ denotes the bandwidth of $f_H(\cdot)$. Thus, \eqref{eq:nyquist} implies that no aliasing occurs as we undersample the fading process $L$ times.

From the received codeword $\{\yv_k, k \in \integ \}$ and the channel-estimate matrices $\{\hat \HM_{s,k}^{(T)}, k \in \mathcal{D} \}$, $s=1,2$ (which are composed of the entries $\hat h_{s,k}^{(T)}(r,t)$), the decoder chooses the pair of messages $(\hat m_1, \hat m_2)$ that minimises the following distance metric
\begin{IEEEeqnarray}{rCl}
(\hat m_1, \hat m_2) & = & \arg \min_{(m_1, m_2)} D(m_1,m_2) 
\end{IEEEeqnarray}
where 
\begin{IEEEeqnarray}{rCl}
D(m_1,m_2) & \triangleq & \sum_{k \in \mathcal{D} } \biggl\|\yv_k - \sqrt{\SNR}\, \hat \HM^{(T)}_{1,k} \xv_{1,k} (m_1)\nonumber\\
& & \qquad\qquad {} - \sqrt{\SNR}\, \hat \HM^{(T)}_{2,k}\xv_{2,k} (m_2) \biggr\|^2.
 \label{eq:decoding-rule-JTD}
\end{IEEEeqnarray}
In the following, we will refer to the above communication scheme as the \emph{joint-transmission scheme}.

We shall compare the joint-transmission scheme with a time-division multiple access (TDMA) scheme, where each user transmits its message using the transmission scheme shown in Fig.~\ref{fig:pilot_data_illustration_TDMA}. In particular, during the first $\beta n'$ channel uses, for some $0\leq \beta \leq 1$, user 1 transmits its codeword according to the transmission scheme given in \cite{IEEE:asyhari_etal:nearest_neighbour_ISIT11} (see also Fig.~\ref{fig:pilot_data_illustration_TDMA}), while user 2 is silent. (Here $n'$ is given in \eqref{eq:total_length}.) Then, during the next $(1 - \beta)n'$ channel uses, user 2 transmits its codeword according to the same transmission scheme, while user 1 is silent. In both cases, the receiver guesses the corresponding message $m_s$, $s=1,2$ using a nearest neighbour decoder and pilot-assisted channel estimation.

\section{The MAC Pre-Log}
\label{sec:prelog}

Let $R_1^*(\SNR)$, $R_2^*(\SNR)$ and $R_{1 + 2}^* (\SNR)$ be the maximum achievable rate for user 1, the maximum achievable rate for user 2 and the maximum sum rate, respectively. The achievable-rate region is given by the closure of the convex hull of the set \cite{cover_elements_inf_theory}
\begin{IEEEeqnarray}{rCll}
\mathcal{R} & = & \Big \{ &R_1(\SNR),R_2 (\SNR) \colon\nonumber\\
& & & R_1(\SNR) < R_1^* (\SNR),\nonumber\\
& & & R_2 (\SNR) < R_2^* (\SNR),\nonumber\\
& & & R_1 (\SNR) + R_2 (\SNR) < R_{1+2}^* (\SNR) \Big\}.
\end{IEEEeqnarray}
We are interested in the pre-logs of $R_1(\SNR)$ and $R_2(\SNR)$, defined as the limiting ratios of $R_1 (\SNR)$ and $R_2 (\SNR)$ to the logarithm of the SNR as the SNR tends to infinity.  Thus, the pre-log region is given by the closure of the convex hull of the set
\begin{IEEEeqnarray}{rCll}
\Pi_{\mathcal{R}} & = &  \Big\{\Pi_{R_1},\Pi_{R_2}\colon & \Pi_{R_1} < \Pi_{R^*_1},\nonumber\\
& & & \Pi_{R_2} < \Pi_{R^*_2},  \nonumber \\
 &  &  & \Pi_{R_1}+\Pi_{R_2}<\Pi_{R^*_{1+2}}\Bigr\}
\end{IEEEeqnarray}
where
\begin{IEEEeqnarray}{rCl}
&  &  \Pi_{R^*_1}  \triangleq  \limsup_{\SNR \rightarrow \infty} ~ \frac{R^*_1(\SNR)}{\log \SNR}, \label{def:gmi-pre-log-constraint-R-1} \\ 
&  & \Pi_{R^*_2}  \triangleq  \limsup_{\SNR \rightarrow \infty} ~ \frac{R^*_2(\SNR)}{\log \SNR}, \label{def:gmi-pre-log-constraint-R-2} \\ 
&  & \Pi_{R^*_{1+2}} \triangleq \limsup_{\SNR \rightarrow \infty} ~ \frac{R^*_{1+2} (\SNR)}{\log \SNR}. \label{def:gmi-pre-log-constraint-R-1-2} \IEEEeqnarraynumspace
\end{IEEEeqnarray}
The capacity pre-logs $\Pi_{C_1}$ and $\Pi_{C_2}$ are defined in the same way but with $R_1 (\SNR)$ and $R_2 (\SNR)$ replaced by the respective capacities $C_1 (\SNR)$ and $C_2 (\SNR)$.

The pre-log for point-to-point MIMO fading channels has been studied in a number of works, see, e.g., \cite{IEEE:lapidoth:ontheasymptotic-capacity,IEEE:koch:fadingnumber_degreeoffreedom,IEEE:etkin:degreeofffreedomMIMO,IEEE:asyhari_etal:nearest_neighbour_ISIT11}. For example, it was shown in \cite{IEEE:asyhari_etal:nearest_neighbour_ISIT11} that the pre-log of point-to-point $(\nr \times \nt)$-dimensional MIMO fading channels achievable with nearest neighbour decoding and pilot-assisted channel estimation is lower-bounded by 
\begin{equation}
 \Pi_{R^*} \geq \min \left(\nt, \nr \right) \left(1 -  \frac{\min \left(\nt, \nr \right)}{L^*} \right) \label{eq:pre-log-LB-mimo-point-to-point}
\end{equation}
where $L^*$ is the largest integer satisfying $L^* \leq \frac{1}{2\lambda_D}$. It has been observed that, if $1/(2\lambda_D)$ is an integer, then this lower bound coincides with the best so far known lower bound on the capacity pre-log derived by Etkin and Tse \cite{IEEE:etkin:degreeofffreedomMIMO}, namely, 
\begin{IEEEeqnarray}{lCl}
\Pi_{C} & \geq & \min (\nt, \nr) \Big(1 - \min(\nt,\nr) \mu\big(\{ \lambda\colon f_H(\lambda)> 0\}\big) \Big)\label{eq:etaMIMO}\IEEEeqnarraynumspace
\end{IEEEeqnarray}
where $\mu (\cdot)$ denotes the Lebesgue measure on the interval $[-1/2,1/2]$. For MISO fading channels, the lower bound \eqref{eq:pre-log-LB-mimo-point-to-point} specialises to
\begin{equation}
 \Pi_{R_1^*} \geq 1 - \frac{1}{L^*}
\end{equation}
which coincides with the capacity pre-log
\begin{equation}
 \Pi_{C}  =  \mu \big(\{\lambda\colon f_H (\lambda) = 0\} \big) \label{eq:capacity_pre_log_UB}
\end{equation}
derived by Koch and Lapidoth \cite{IEEE:koch:fadingnumber_degreeoffreedom} for MISO channels when $1/(2\lambda_D)$ is an integer. Thus, for point-to-point MISO fading channels, and if $1/(2\lambda_D)$ is an integer, then the communication scheme described in Section~\ref{sec:trans_scheme} achieves the capacity pre-log. 

In the following theorem, we present our result on the pre-log region of the two-user MIMO fading MAC achievable with the joint-transmission scheme.

\begin{theorem}
\label{th:JTD_pre-log}
 Consider the MIMO fading MAC model \eqref{eq:channel}. Then, the pre-log region achievable with the joint-transmission scheme described in Section \ref{sec:trans_scheme} is given by the closure of the convex hull of the set
 \begin{IEEEeqnarray}{rCll}
& &  \Bigg  \{ &   \Pi_{R_1}, \Pi_{R_2}   \colon \nonumber \\
& & & \,\, \Pi_{R_1}  <  \min \left( \nr, \ntone \right) \left(1 - \frac{\ntone + \nttwo}{L^*} \right), \nonumber \\
& & & \,\,\Pi_{R_2}  <  \min \left( \nr, \nttwo \right) \left(1 - \frac{\ntone + \nttwo}{L^*} \right), \nonumber \\
& & & \,\,\Pi_{R_1} + \Pi_{R_2}   <  \min \left( \nr, \ntone + \nttwo \right) \left(1 - \frac{\ntone + \nttwo}{L^*} \right) \Bigg \} \nonumber \\
\label{eq:thJTD_pre-log}
\end{IEEEeqnarray}
where $L^*$ is the largest integer satisfying $L^* \leq \frac{1}{2\lambda_D}$. 
\end{theorem}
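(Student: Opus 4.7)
The plan is to extend the single-user generalized mutual information (GMI) analysis of \cite{IEEE:asyhari_etal:nearest_neighbour_ISIT11} to the two-user MAC. First I would rewrite \eqref{eq:channel} at each data instant $k\in\mathcal{D}$ as
\begin{equation*}
\yrv_k \;=\; \sqrt{\SNR}\,\hat{\HM}_{1,k}^{(T)}\xv_{1,k} \;+\; \sqrt{\SNR}\,\hat{\HM}_{2,k}^{(T)}\xv_{2,k} \;+\; \tilde{\zrv}_k,
\end{equation*}
with effective noise $\tilde{\zrv}_k \triangleq \sqrt{\SNR}\,\EM_{1,k}^{(T)}\xv_{1,k} + \sqrt{\SNR}\,\EM_{2,k}^{(T)}\xv_{2,k} + \zrv_k$. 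Interpreted this way, the decoder \eqref{eq:decoding-rule-JTD} is the mismatched maximum-likelihood rule that treats $\tilde{\zrv}_k$ as a white Gaussian vector matching its second-order statistics.

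Second, a GMI-based inner bound on the achievable rate region is obtained via pairwise-error analysis of the three joint-decoder error events (wrong $m_1$ only, wrong $m_2$ only, wrong pair). Following the closed-form single-user computation of \cite{IEEE:asyhari_etal:nearest_neighbour_ISIT11}, this yields, up to an $O(1/n)$ guard-period loss, a pentagon of the familiar MAC form
\begin{align*}
R_1 &< \bigl(1-\tfrac{\ntone+\nttwo}{L}\bigr)\,I_1(\SNR),\\
R_2 &< \bigl(1-\tfrac{\ntone+\nttwo}{L}\bigr)\,I_2(\SNR),\\
R_1+R_2 &< \bigl(1-\tfrac{\ntone+\nttwo}{L}\bigr)\,I_{12}(\SNR),
\end{align*}
where each $I(\SNR)$ is an averaged $\log\det$ in the estimates $\hat{\HM}_{s,k}^{(T)}$ with an effective signal-to-noise ratio of the form $\SNR/(1+c\,\SNR\,\epsilon^2)$, with $c$ an explicit constant depending on $(\ntone,\nttwo)$.

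To extract the pre-logs I would choose $L=L^*$, which by \eqref{eq:nyquist}--\eqref{eq:fadingestimate_error} keeps $\SNR\,\epsilon^2$ bounded as $\SNR\to\infty$. The effective SNR inside each log-det then grows linearly with $\SNR$, so $I_1$, $I_2$, and $I_{12}$ attain their coherent multiplexing gains $\min(\nr,\ntone)$, $\min(\nr,\nttwo)$, and $\min(\nr,\ntone+\nttwo)$ respectively. Multiplying by the data-symbol fraction $1-(\ntone+\nttwo)/L^*$ gives \eqref{eq:thJTD_pre-log}.

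The hardest step will be the GMI computation itself. The estimation error $\EM_{s,k}^{(T)}$ is only cyclo-stationary with period $L$ (its variance $\epsilon^2_{s,T}(\ell,r,t)$ depends on $\ell=k\bmod L$), so each $I(\SNR)$ is really a $\tfrac{1}{L}$-weighted average over $\ell\in\{0,\ldots,L-1\}$ of per-sub-channel GMIs, and one must verify that every sub-channel contributes the same pre-log whenever $L\leq L^*$, so that taking $T\to\infty$ first lets us replace $\epsilon^2_{s,T}(\ell,r,t)$ by \eqref{eq:interpolationerror} without loss. A secondary subtlety is that $\tilde{\zrv}_k$ is non-Gaussian, since $\EM_{s,k}^{(T)}\xv_{s,k}$ is a product of two Gaussians; the GMI bypasses this because the mismatched decoder only sees the second-order statistics of $\tilde{\zrv}_k$, which are faithfully captured by a Gaussian surrogate.
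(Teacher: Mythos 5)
Your proposal follows essentially the same route as the paper: a GMI analysis of the same three error events (wrong $m_1$, wrong $m_2$, wrong pair) under the effective-noise decomposition $\tilde{\zrv}_k = \sqrt{\SNR}\,\EM_{1,k}^{(T)}\xv_{1,k} + \sqrt{\SNR}\,\EM_{2,k}^{(T)}\xv_{2,k} + \zrv_k$, yielding log-det bounds with effective SNR of the form $\SNR/(1+c\,\SNR\,\epsilon^2)$, followed by taking $T\to\infty$, choosing $L=L^*$ so that $\SNR\,\epsilon^2$ stays bounded, and reading off the multiplexing gains scaled by the data fraction $1-(\ntone+\nttwo)/L^*$. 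The subtleties you flag (cyclo-stationarity of the estimation error and the order of limits in $T$) are exactly the ones the paper handles, so the outline is correct and matches the paper's argument.
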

\begin{proof}
 An outline of the proof is given in Section \ref{sec:proofs}.
\end{proof}

\begin{remark}
The pre-log region given in Theorem \ref{th:JTD_pre-log} is the largest region achievable with any transmission scheme that uses $(\ntone+\nttwo)/L^*$ of the time for transmitting pilot symbols. Indeed, even if the channel estimator would be able to estimate the fading coefficients perfectly, and even if we could decode the data symbols using a maximum-likelihood decoder, the capacity pre-log region (without pilot transmission) would be given by the closure of the convex hull of the set \cite{Bell_foschini_layered_space-time,telatar_multiantenna_Gaussian,cover_elements_inf_theory}
\begin{IEEEeqnarray}{rCll}
 & & \Big\{(\Pi_{R_1},\Pi_{R_2})\colon &\Pi_{R_1} < \min(\nr, \ntone) \nonumber\\
 & & & \Pi_{R_2} < \min(\nr, \nttwo) \nonumber\\
 & & & \Pi_{R_1}+\Pi_{R_2} < \min(\nr, \ntone + \nttwo) \Big\}\IEEEeqnarraynumspace
\end{IEEEeqnarray}
which, after multiplying by $1-(\ntone+\nttwo)/L^*$ in order to take the pilot symbols into account, becomes \eqref{eq:thJTD_pre-log}. Thus, in order to improve upon \eqref{eq:thJTD_pre-log}, one needs to design a transmission scheme that employs less than $(\ntone+\nttwo)/L^*$ pilot symbols per channel use.
\end{remark}

\begin{remark}[TDMA Pre-Log]
\label{rmrk:TDMA_pre-log}
Consider the MIMO fading MAC model \eqref{eq:channel}. Then, the pre-log region achievable with the TDMA scheme described in Section~\ref{sec:trans_scheme} is the closure of the convex hull of the set
\begin{IEEEeqnarray}{rCll}
 & & \Bigg  \{ &  \Pi_{R_1}, \Pi_{R_2}  \colon \nonumber \\
& &  & \,\,\Pi_{R_1}  <  \beta \min  \left( \nr, \ntone \right) \left(1 - \frac{\ntone}{L^*} \right), \nonumber \\
& & & \,\,\Pi_{R_2}  <  ( 1 - \beta) \min  \left( \nr, \nttwo \right) \left(1 - \frac{\nttwo}{L^*} \right), 0 \leq \beta \leq 1  \Bigg \} \IEEEeqnarraynumspace\nonumber\\
\end{IEEEeqnarray}
where $L^*$ is the largest integer satisfying $L^* \leq \frac{1}{2\lambda_D}$. This follows directly from the pre-log of the point-to-point MIMO fading channel \eqref{eq:pre-log-LB-mimo-point-to-point}.
\end{remark}

Note that the sum of the pre-logs $\Pi_{R_1}+\Pi_{R_2}$ is upper-bounded by the capacity pre-log of the point-to-point MIMO fading channel with $(\ntone + \nttwo)$ transmit antennas and $\nr$ receive antennas, since the point-to-point MIMO channel allows for cooperation between the terminals. While the capacity pre-log of point-to-point MIMO fading channels remains an open problem, the capacity pre-log of point-to-point MISO fading channels is known, cf.~\eqref{eq:capacity_pre_log_UB}. It thus follows from \eqref{eq:capacity_pre_log_UB} that
\begin{equation}
\Pi_{R_1}+\Pi_{R_2} \leq 1-2\lambda_D
\end{equation}
which together with the single-user constraints \cite{IEEE:lapidoth:ontheasymptotic-capacity}
\begin{IEEEeqnarray}{lClCl}
\Pi_{R_1} & \leq & \Pi_{C_1} = 1-2\lambda_D \\
\Pi_{R_2} & \leq & \Pi_{C_2} = 1-2\lambda_D
\end{IEEEeqnarray}
implies that, for $\nr=\ntone=\nttwo=1$, TDMA achieves the capacity pre-log region. The next section provides a more detailed comparison between the joint-transmission scheme and TDMA.

\section{Joint-Transmission vs.\ TDMA}
\label{sec:discussion}

In this section, we discuss how the joint-transmission scheme described in Section~\ref{sec:trans_scheme} performs compared to TDMA. To this end, we compare the sum-rate pre-log $\Pi_{R^*_{1+2}}$ of the joint-transmission scheme (Theorem~\ref{th:JTD_pre-log}) with the sum-rate pre-log of the TDMA scheme described in Section~\ref{sec:trans_scheme} as well as with the sum-rate pre-log of TDMA when the receiver has knowledge of the realisations of the fading processes $\{\HRM_{s,k},\,k\in\integ\}$, $s=1,2$. In the former case, the sum-rate pre-log is given in Remark~\ref{rmrk:TDMA_pre-log}, whereas in the latter case it is
\begin{equation}
\label{eq:TDMAcoh}
\Pi_{R^*_{1+2}} = \beta \min(\nr,\ntone) + (1-\beta) \min(\nr,\nttwo).
\end{equation}
The following corollary presents a sufficient condition on $L^*$ under which the sum-rate pre-log of the joint-transmission scheme is strictly larger than the sum-rate pre-log of the coherent TDMA scheme \eqref{eq:TDMAcoh}, as well as a sufficient condition on $L^*$ under which it is strictly smaller than the sum-rate pre-log of the TDMA scheme given in Remark~\ref{rmrk:TDMA_pre-log}. Since \eqref{eq:TDMAcoh} is an upper-bound on the sum-rate pre-log of any TDMA scheme over the MIMO fading MAC \eqref{eq:channel}, and since the sum-rate pre-log given in Remark~\ref{rmrk:TDMA_pre-log} is a lower bound on the sum-rate pre-log of the best TDMA scheme, it follows that the sufficient conditions presented in Corollary~\ref{cor:JTvsTDMA} hold also for the best TDMA scheme.

\begin{corollary}
\label{cor:JTvsTDMA}
Consider the MIMO fading MAC model \eqref{eq:channel}. The joint-transmission scheme described in Section~\ref{sec:trans_scheme} achieves a larger sum-rate pre-log than any TDMA scheme if
\begin{equation}
\label{eq:corJT}
L^* > \frac{\min(\nr,\ntone+\nttwo)(\ntone+\nttwo)}{\min(\nr,\ntone+\nttwo)-\min\bigl(\nr,\max(\ntone,\nttwo)\bigr)}
\end{equation}
where we define $a/0\triangleq \infty$ for every $a>0$. Conversely, the best TDMA scheme achieves a larger sum-rate pre-log than the joint-transmission scheme if
\begin{IEEEeqnarray}{rCl}
L^* & < & \frac{\min(\nr,\ntone+\nttwo)(\ntone+\nttwo)}{\min(\nr,\ntone+\nttwo)-\min(\nr,\ntone,\nttwo)}  \nonumber \\
&  &  - \frac{\min(\ntone\nr,\ntone^2,\nttwo\nr,\nttwo^2)}{\min(\nr,\ntone+\nttwo)-\min(\nr,\ntone,\nttwo)}.  \label{eq:corTDMA}
\end{IEEEeqnarray}
\end{corollary}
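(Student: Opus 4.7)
The plan is a purely algebraic comparison of the closed-form sum-rate pre-logs supplied by Theorem~\ref{th:JTD_pre-log} and Remark~\ref{rmrk:TDMA_pre-log}. From the sum-rate face in \eqref{eq:thJTD_pre-log}, the joint-transmission sum-rate pre-log is $\Pi^{\rm JT}_{R^*_{1+2}}=\min(\nr,\ntone+\nttwo)\bigl(1-(\ntone+\nttwo)/L^*\bigr)$. The noncoherent TDMA sum-rate pre-log from Remark~\ref{rmrk:TDMA_pre-log} is the maximum over $\beta\in[0,1]$ of a function affine in $\beta$, hence is attained at an endpoint and equals $\max_{s\in\{1,2\}}\min(\nr,\nts)\bigl(1-\nts/L^*\bigr)$; by the same affine argument the coherent expression \eqref{eq:TDMAcoh} maximises to $\min(\nr,\max(\ntone,\nttwo))$, which is an upper bound on every noncoherent TDMA strategy.

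For the first assertion \eqref{eq:corJT} I would compare $\Pi^{\rm JT}_{R^*_{1+2}}$ against the coherent TDMA upper bound. Rearranging
\[
\min(\nr,\ntone+\nttwo)\Bigl(1-\frac{\ntone+\nttwo}{L^*}\Bigr) > \min(\nr,\max(\ntone,\nttwo))
\]
yields \eqref{eq:corJT} whenever $\min(\nr,\ntone+\nttwo)>\min(\nr,\max(\ntone,\nttwo))$; the convention $a/0\triangleq\infty$ correctly reports ``never satisfiable'' in the degenerate case where these two quantities coincide. Since JT beating coherent TDMA automatically implies JT beating the best noncoherent TDMA, this settles the first part.

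For the second assertion \eqref{eq:corTDMA} I would instead lower-bound the best TDMA by the single-user endpoint corresponding to $n_\star\triangleq\min(\ntone,\nttwo)$. The maps $n\mapsto\min(\nr,n)$ and $n\mapsto n\min(\nr,n)$ are both nondecreasing on the non-negative integers, so $\min(\nr,n_\star)=\min(\nr,\ntone,\nttwo)$ and $n_\star\min(\nr,n_\star)=\min(\ntone\nr,\ntone^2,\nttwo\nr,\nttwo^2)$. Imposing $\min(\nr,n_\star)\bigl(1-n_\star/L^*\bigr)>\Pi^{\rm JT}_{R^*_{1+2}}$, substituting these two identities, and solving the resulting linear inequality for $L^*$ produces \eqref{eq:corTDMA} verbatim.

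The only (mild) obstacle I anticipate is the bookkeeping of the boundary cases in which a denominator vanishes: when $\min(\nr,\max(\ntone,\nttwo))=\min(\nr,\ntone+\nttwo)$, or when $\min(\nr,\ntone,\nttwo)=\min(\nr,\ntone+\nttwo)$, the nested minima collapse and one must verify that the $a/0\triangleq\infty$ convention yields the intended threshold (``never satisfied'' for \eqref{eq:corJT} and ``always satisfied'' for \eqref{eq:corTDMA}). Beyond this, the argument reduces to one monotonicity observation followed by two one-line rearrangements.
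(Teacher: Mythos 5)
Your proposal is correct and follows exactly the route the paper intends (the corollary is left unproved in the text, but the preceding paragraph sets up precisely this argument): compare the joint-transmission sum-rate pre-log $\min(\nr,\ntone+\nttwo)\bigl(1-(\ntone+\nttwo)/L^*\bigr)$ against the coherent-TDMA upper bound $\min\bigl(\nr,\max(\ntone,\nttwo)\bigr)$ to get \eqref{eq:corJT}, and against the Remark~\ref{rmrk:TDMA_pre-log} endpoint in which the user with $\min(\ntone,\nttwo)$ antennas transmits alone to get \eqref{eq:corTDMA}. Your algebra, your identification of $\min(\ntone\nr,\ntone^2,\nttwo\nr,\nttwo^2)$ with $n_\star\min(\nr,n_\star)$, and your handling of the vanishing-denominator cases all check out against the paper's worked special cases in Section~\ref{sec:discussion}.
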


Recall that $L^*$ is inversely proportional to the bandwidth of the power spectral density $f_H(\cdot)$, which in turn is inversely proportional to the coherence time of the fading channel. We thus see from Corollary~\ref{cor:JTvsTDMA} that the joint-transmission scheme tends to be superior to TDMA when the coherence time of the channel is large. In contrast, TDMA is superior to the joint-transmission scheme when the coherence time of the channel is small. 

Intuitively, this can be explained by observing that, compared to TDMA, the joint-transmission scheme uses the multiple antennas at the transmitters and at the receiver more efficiently, but requires more pilot symbols to estimate the fading coefficients. Thus, when the coherence time is large, then the number of pilot symbols required to estimate the fading is small, so the gain in capacity by using the antennas more efficiently dominates the loss incurred by requiring more pilot symbols. Hence, in this case the joint-transmission scheme is superior to TDMA.

We next evaluate \eqref{eq:corJT} and \eqref{eq:corTDMA} for some particular values of $\nr$, $\ntone$, and $\nttwo$.

\subsection{Receiver Employs Less Antennas Than Transmitters}
Suppose that the number of receive antennas is smaller than the number of transmit antennas, i.e., $\nr\leq\min(\ntone,\nttwo)$. Then, the right-hand sides (RHS) of \eqref{eq:corJT} and \eqref{eq:corTDMA} become $\infty$ and every finite $L^*$ satisfies \eqref{eq:corTDMA}. Thus, if the number of receive antennas is smaller than the number of transmit antennas, then, irrespective of $L^*$, TDMA is superior to the joint-transmission scheme.

\subsection{Receiver Employs More Antennas Than Transmitters}

Suppose that the receiver employs more antennas than the transmitters, i.e., $\nr\geq\ntone+\nttwo$, and suppose that $\ntone=\nttwo=\nt$. Then, \eqref{eq:corJT} becomes
\begin{equation}
\label{eq:JT_SIMO}
L^* > 4\nt
\end{equation}
and \eqref{eq:corTDMA} becomes
\begin{equation}
\label{eq:TDMA_SIMO}
L^* < 3\nt.
\end{equation}
Thus, if $L^*$ is greater than $4\nt$, then the joint-transmission scheme is superior to TDMA. In contrast, if $L^*$ is smaller than $3\nt$, then TDMA is superior. Note that if $L^*$ is between $3\nt$ and $4\nt$, then the joint-transmission scheme is superior to the TDMA scheme presented in Section~\ref{sec:trans_scheme}, but it is inferior to the coherent TDMA scheme \eqref{eq:TDMAcoh}. This is illustrated in Fig.~\ref{fig:simo-mac-plot} for the case where $\nr=2$ and $\ntone=\nttwo=1$.

\begin{figure*}[t]
\mbox{
\begin{psfrags}
  \psfrag{a}{\tiny $1- \frac{1}{L^*}$}
 \psfrag{b}{\tiny $1 - \frac{2}{L^*}$}
 \psfrag{c}{\tiny $\Pi_{R_1}$}
 \psfrag{d}{\tiny $\Pi_{R_2}$}
 \psfrag{e}{\tiny $1$}
 \subfloat[$ L^* < 3 $] {\includegraphics[width=0.35\textwidth]{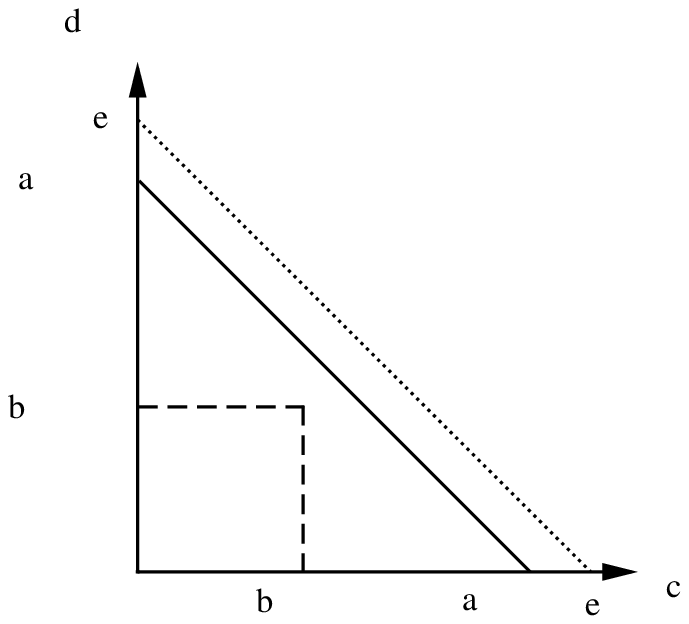}
\label{subfig:simo-mac-alpha-p3-to-p2} }
\end{psfrags}
\begin{psfrags}
 \psfrag{a}{\tiny $1-\frac{1}{L^*}$}
 \psfrag{b}{\tiny $1 - \frac{2}{L^*}$}
 \psfrag{c}{\tiny $\Pi_{R_1}$}
 \psfrag{d}{\tiny $\Pi_{R_2}$}
 \psfrag{e}{\tiny $1$}
  \subfloat[$ L^* > 4$]{ \includegraphics[width=0.35\textwidth]{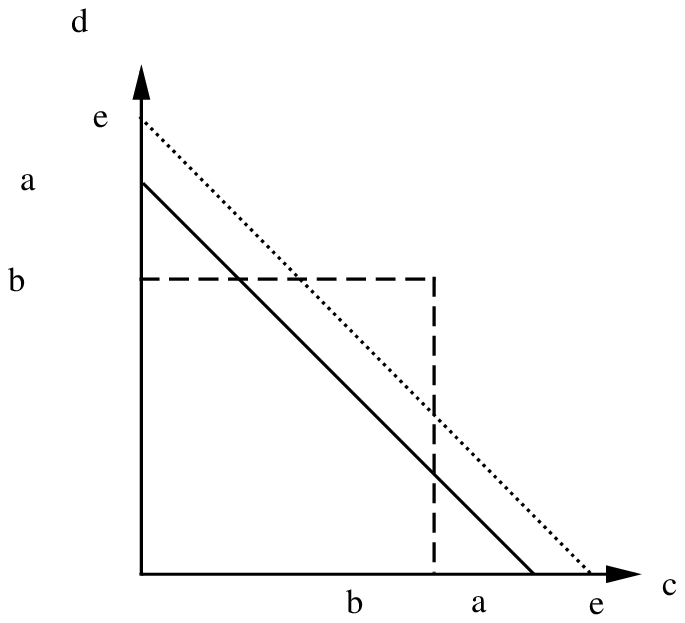}
\label{subfig:simo-mac-alpha-p4}
} 
\end{psfrags}\hspace{-25mm}
 \subfloat{\includegraphics[width=0.4\textwidth]{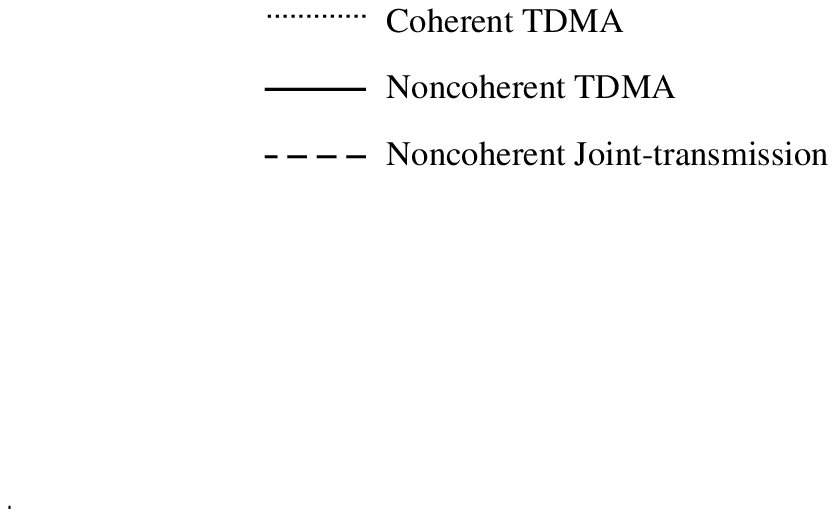}}}
\caption{Pre-log regions for a fading MAC with $\nr=2$ and $\ntone=\nttwo=1$ for different values of $L^*$. Depicted are the pre-log region for the joint-transmission scheme as given in Theorem~\ref{th:JTD_pre-log} (dashed line), the pre-log region of the TDMA scheme as given in Remark~\ref{rmrk:TDMA_pre-log} (solid line), and the pre-log region of the coherent TDMA scheme \eqref{eq:TDMAcoh} (dotted line).}
\label{fig:simo-mac-plot}
\end{figure*}

\subsection{A Case In Between}
Suppose that $\nr\leq\ntone+\nttwo$ and $\nttwo<\nr\leq\ntone$. Then, \eqref{eq:corJT} becomes
\begin{equation}
L^* > \infty
\end{equation}
and \eqref{eq:corTDMA} becomes
\begin{equation}
L^* < \nttwo+\frac{\nr\ntone}{\nr-\nttwo}.
\end{equation}
Thus, in this case the joint-transmission scheme is always inferior to the coherent TDMA scheme \eqref{eq:TDMAcoh}, but it can be superior to the TDMA scheme presented in Section~\ref{sec:trans_scheme}.

\subsection{Typical Values of $L^*$}
We briefly discuss what values of $L^*$ may occur in practical scenarios. To this end, we first recall that $L^*$ is the largest integer satisfying $L^*\leq\frac{1}{2\lambda_D}$, where $\lambda_D$ is the bandwidth of the spectral distribution density $f_H(\cdot)$. It can be associated with the Doppler spread of the channel as
\begin{equation}
\lambda_D = \frac{f_m}{W_c}
\end{equation}
where $f_m$ is the maximum Doppler shift and $W_c$ is the coherence bandwidth of the channel. Following the computations of Etkin and Tse \cite{IEEE:etkin:degreeofffreedomMIMO}, we shall determine typical values of $\lambda_D$ for indoor, urban, and rural environments for carrier frequencies ranging from 800 MHz to 5 GHz. For indoor environments, assuming mobile speeds of 5 km/h, $\lambda_D$ ranges from $2\cdot 10^{-7}$ to $10^{-5}$. For urban environments, assuming the same mobile speeds, $\lambda_D$ ranges from $2\cdot 10^{-5}$ to $2\cdot 10^{-4}$, whereas for mobile speeds of 75 km/h, $\lambda_D$ ranges from $2\cdot 10^{-4}$ to $0.004$. Finally, for rural environments and mobile speeds of 200 km/h, $\lambda_D$ ranges from $0.007$ to $0.05$.

For indoor environments and mobile speeds of 5 km/h, we thus have that $L^*$ is typically greater than $10^5$. For urban environments, $L^*$ is typically greater than $5\cdot 10^3$ for mobile speeds of 5 km/h and greater than $125$ for mobile speeds of 75 km/h. For rural environments and mobile speeds of 200 km/h, $L^*$ ranges typically from $10$ to $71$. Thus, for most practical scenarios, $L^*$ is typically large. It therefore follows that, if $\nr\geq\ntone+\nttwo$, \eqref{eq:corJT} is satisfied unless $\ntone+\nttwo$ is very large. For example, if the receiver employs more antennas than the transmitters, and if $\ntone=\nttwo=\nt$, then $L^* > 4\nt$, is satisfied even for urban environments and mobile speeds of 75 km/h, as long as $\nt<30$. Only for rural environments and mobile speeds of 200 km/h, this condition may not be satisfied for a practical number of transmit antennas. Thus, if the number of antennas at the receiver is sufficiently large, then the joint-transmission scheme is superior to TDMA in most practical scenarios. On the other hand, if $\nr\leq\min(\ntone,\nttwo)$, then TDMA is always superior to the joint-transmission scheme, irrespective of how large $L^*$ is. This suggests that one should use more antennas at the receiver than at the transmitters.

\section{Proof Outline}
\label{sec:proofs}

Since the codebook construction is symmetric, it suffices to study the conditional probability of error, conditioned on the event that the messages $(m_1,m_2)=(1,1)$ were transmitted. Let $\mathcal{E}(m_1', m_2')$ be the event that $D(m_1',m_2') \leq D(1,1)$. The error probability can be upper-bounded by
\begin{equation}
P_e \leq \Pr \left \{ \bigcup_{(m_1', m_2') \neq (1,1)}  \mathcal{E}(m_1', m_2') \right \}.
\end{equation}
This upper bound depends on the three error events $(m_1' \neq 1, m_2' = 1)$, $(m_1' = 1, m_2' \neq 1)$ and $(m_1' \neq 1, m_2' \neq 1)$. 

To prove Theorem \ref{th:JTD_pre-log}, we analyse the generalised mutual information (GMI) for the channel model considered in Section \ref{sec:intro} and the transmission scheme in Section \ref{sec:trans_scheme}. The GMI, denoted by $I^{\rm gmi}(\SNR)$, specifies the highest information rate for which the average probability of error, averaged over the ensemble of i.i.d. Gaussian codebooks, tends to zero as the codeword length $n$ tends to infinity (see \cite{IEEE:lapidoth:nearestneighbournongaussian,IEEE:lapidoth:fadingchannels_howperfect,IEEE:weingarten:gaussiancodes} and references therein). In accordance with the above error events, we consider the following three maximum achievable rates: $\gmi_1 (\SNR)$ and $\gmi_2 (\SNR)$ specify the maximum transmission rate for user 1 ($m'_1 \neq 1, m'_2 = 1$) and user 2 ($m'_1 = 1, m'_2 \neq 1$), respectively, whereas $\gmi_{1+2} (\SNR)$ specifies the maximum sum-rate ($m'_1 \neq 1, m'_2 \neq 1$).

\subsection*{\underline{Error Event $m_1' \neq 1, m_2' = 1$}}

Let $\mathbb{E}_{s,k}^{(T)}$ denote the estimation-error matrix in estimating $\mathbb{H}_{s,k}$, i.e., $\mathbb{E}_{s,k}^{(T)}$ is composed of the entries $E_{s,k}^{(T)}(r,t)$ \eqref{eq:Ert}. Then, the GMI corresponding to the event $\mathcal{E}(m'_1, 1)$, $m'_1 \neq 1$ can be evaluated as \cite{IEEE:lapidoth:fadingchannels_howperfect, IEEE:weingarten:gaussiancodes}
\begin{equation}
I^{\rm gmi}_1 (\SNR) =  \sup_{\theta \leq 0} ~ \bigl(\theta F(\SNR) - \kappa_1 (\theta, \SNR)  \bigr) \label{eq:gmi_user_1_def}
\end{equation}
where 
\begin{IEEEeqnarray}{rCl}
F(\SNR) & = & \frac{\nr(L-\ntone-\nttwo)}{L} \nonumber\\
\IEEEeqnarraymulticol{3}{r}{\qquad {} + \frac{1}{L} \sum^{L-\ntone-\nttwo}_{\ell=1} \mathsf{E} \left[\SNR \left( \left \| \ERM^{(T)}_{1,\ell} \right \|^2_F +  \left \|  \ERM^{(T)}_{2,\ell} \right \|^2_F \right) \right] \IEEEeqnarraynumspace} \label{eq:expectation-F}
\end{IEEEeqnarray}
(with $\| \cdot \|_F$ denoting the Frobenius norm); and where $\kappa_1 (\theta, \SNR)$ is the conditional log moment-generating function of the metric $D (m_1', m_2')$ associated with $m_1' \neq 1$,conditioned on the channel outputs, on $m_2' = 1$ and on the fading estimates, given by
\begin{IEEEeqnarray}{rCl}
\kappa_1 (\theta) & = & \frac{1}{L} \sum^{L- \ntone - \nttwo}_{\ell=1} g_{1,\ell} \nonumber\\
\IEEEeqnarraymulticol{3}{r}{\quad {}  - \frac{1}{L} \sum^{L- \ntone - \nttwo}_{\ell=1} \mathsf{E} \left[ \log {\rm det} \left(\mathsf{I}_\nr - \theta \SNR \hat \HRM_{1,\ell}^{(T)} \hat \HRM_{1,\ell}^{\dagger(T)} \right)\right]\IEEEeqnarraynumspace}
\end{IEEEeqnarray}
where
\begin{IEEEeqnarray}{rCl}
 g_{1,\ell} & \triangleq & \mathsf{E} \bigg[ \theta \left( \yrv_\ell - \sqrt{\SNR} \hat \HRM_{2,\ell}^{(T)} \xrv_{2,\ell} \right)^\dagger\nonumber\\
 & & \qquad {} \times\left( \mathsf{I}_\nr - \theta \SNR \hat \HRM_{1,\ell}^{(T)} \hat \HRM_{1,\ell}^{\dagger(T)}  \right)^{-1}\nonumber\\
 & & \qquad\qquad {} \times \left( \yrv_\ell - \sqrt{\SNR} \hat \HRM_{2,\ell}^{(T)} \xrv_{2,\ell} \right) \bigg].\IEEEeqnarraynumspace
\end{IEEEeqnarray}

Following \cite{IEEE_IT_Asyhari_Nearest_neighbour}, it can be shown that for $\theta \leq 0$ we have $g_{1,\ell} \leq 0$. As observed in \cite{IEEE:asyhari_etal:nearest_neighbour_ISIT11}, the choice
\begin{equation}
 \theta = - \frac{1}{\nr +  \nr \left( \ntone + \nttwo \right)  \SNR \epsilon^2_{T} }
\end{equation}
yields a good lower bound at high SNR. Here
\begin{equation}
\label{eq:estimationerror}
\epsilon^2_{T} = \max_{s,r,t,\ell} ~ \mathsf{E} \left[ \left|E_{s,\ell}^{(T)} (r,t) \right|^2 \right].
\end{equation}
Substituting this choice to the RHS of \eqref{eq:gmi_user_1_def}, and applying $g_{1,\ell} \leq 0$ to upper-bound $\kappa_1 (\theta, \SNR)$, we obtain
\begin{IEEEeqnarray}{rCl}
\gmi_{1} & \geq & \frac{1}{L} \sum^{L-\ntone -\nttwo}_{\ell=1}\mathsf{E} \Biggl[ \log {\rm det} \Biggl(\mathsf{I}_\nr + \nonumber\\
& & \qquad\qquad\qquad\quad\,\, {}  + \frac{\SNR\hat \HRM_{1,\ell}^{(T)} \hat \HRM_{1,\ell}^{\dagger (T)}}{\nr +   \nr \left(\ntone + \nttwo \right)  \SNR \epsilon^2_{T} } \Biggr) \Biggr] \nonumber\\
& & {} -\frac{L-\ntone-\nttwo}{L}. 
 \label{eq:lower_bound_gmi_1_Tfixed} 
\end{IEEEeqnarray}

We continue by analysing the RHS of \eqref{eq:lower_bound_gmi_1_Tfixed} in the limit as the observation window $T$ of the channel estimator tends to infinity. To this end, we note that, for $L\leq\frac{1}{2\lambda_D}$, the interpolation error in \eqref{eq:estimationerror}  tends to \eqref{eq:fadingestimate_error}
\begin{equation}
\lim_{T \rightarrow \infty} \epsilon^2_{T} =  \epsilon^2 = 1 - \int^{1/2}_{-1/2} \frac{\SNR(f_H (\lambda))^2}{\SNR f_H (\lambda) + L} d\lambda.
\end{equation}
It follows that, irrespective of $s$ and $k$, the estimate $\hat \HRM_{s,k}^{(T)}$ tends to $\bar \HRM$ in distribution as $T$ tends to infinity, so
\begin{IEEEeqnarray}{lCl}
\IEEEeqnarraymulticol{3}{l}{\frac{\hat \HRM_{s,k}^{(T)} \hat \HRM_{s,k}^{\dagger (T)}}{ \nr +   \nr \left( \ntone  + \nttwo \right) \SNR \epsilon^2_T}}\nonumber\\
\qquad\qquad\qquad & \stackrel{d}{\rightarrow} & \frac{\bar \HRM \bar \HRM^\dagger}{ \nr + \nr \left( \ntone + \nttwo \right) \SNR \epsilon^2} 
\end{IEEEeqnarray}
where the entries of $\bar \HRM$ are i.i.d., circularly-symmetric, complex Gaussian random variables with zero mean and variance $1 - \epsilon^2$. Since the function ${\sf A} \mapsto \det({\sf I}+{\sf A})$ is continuous and bounded from below, we obtain from Portmanteau's lemma \cite{vdvaart_weakconvergence}  that
\begin{IEEEeqnarray}{rCl}
\IEEEeqnarraymulticol{3}{l}{\lim_{T\to\infty}\gmi_1 (\SNR)} \nonumber \\
&\geq& \frac{L- \ntone  - \nttwo}{L} \Biggl(-1 + \nonumber\\
& & \!\! {} + \mathsf{E} \left[ \log {\rm det} \left(\mathsf{I}_\nr + \frac{\SNR \bar \HRM \bar \HRM^\dagger }{ \nr +   \nr \left( \ntone + \nttwo \right) \SNR \epsilon^2 } \right) \right] \Biggr)  \IEEEeqnarraynumspace\\
&\geq&  \frac{L -  \ntone - \nttwo}{L}\min(\nr,\ntone)\Bigl(\log\SNR \nonumber\\
& & \qquad\qquad\qquad\qquad\,\, {} -  \log\bigl(\nr +  \nr ( \ntone + \nttwo ) \SNR \epsilon^2\bigr) \Bigr) \nonumber\\
& & {} + \frac{L -  \ntone - \nttwo}{L}\Psi \IEEEeqnarraynumspace \label{eq:lower_bound_gmi_1_asymptotic_ntonesmall}
\end{IEEEeqnarray}
where
\begin{equation}
\Psi \triangleq \left\{\begin{array}{ll} \mathsf{E}[\log\det \hermi{\bar\HRM} \bar\HRM]-1, \quad & \nr\geq\ntone \\  \mathsf{E} [\log\det\bar\HRM\hermi{\bar\HRM}]-1, \quad & \nr<\ntone.\end{array}\right.
\end{equation}
Here the last inequality follows by lower-bounding $\log {\rm det} \left( {\sf I} + {\sf A} \right) \geq \log {\rm det} {\sf A}$.

\newcounter{old_equation}
\setcounter{old_equation}{\value{equation}}
\begin{figure*}[t]
\setcounter{equation}{57}
\begin{IEEEeqnarray}{rCl}
\kappa_{1,2} (\theta)
&=& \frac{1}{L} \sum^{L- \ntone - \nttwo}_{\ell =1} \mathsf{E} \left[ \theta \yrv_\ell^\dagger \left( \mathsf{I}_\nr - \theta \SNR \left( \hat \HRM_{1,\ell}^{(T)} \hat \HRM_{1,\ell}^{\dagger(T)} + \hat \HRM_{2,\ell}^{(T)} \hat \HRM_{2,\ell}^{\dagger(T)} \right) \right)^{-1} \yrv_\ell \right]  \IEEEeqnarraynumspace \nonumber \\
&   & {} - \frac{1}{L} \sum^{L- \ntone - \nttwo}_{\ell=1} \mathsf{E} \left[ \log {\rm det} \biggl(\mathsf{I}_\nr - \theta \SNR  \Bigl( \hat \HRM_{1,\ell}^{(T)} \hat \HRM_{1,\ell}^{\dagger(T)} + \hat \HRM_{2,\ell}^{(T)} \hat \HRM_{2,\ell}^{\dagger(T)}  \Bigr) \biggr)   \right]. \label{eq:kappa_1_2}
\end{IEEEeqnarray}
\rule{\textwidth}{0.5pt}
\setcounter{equation}{\value{old_equation}}
\end{figure*}

To compute the pre-log
\begin{equation}
\Pi_{R^*_1} \triangleq \lim_{\SNR \rightarrow \infty} \frac{I^{\rm gmi}_1 (\SNR) }{\log \SNR}
\end{equation}
we first note that, by \cite{eurasip:grant:rayleighfading_multi}, $\Psi$ is finite. We further note that 
\begin{equation}
\SNR\, \epsilon^2  = \int^{1/2}_{-1/2} \frac{\SNR f_H(\lambda) L}{\SNR f_H (\lambda) + L} d\lambda \leq L
\end{equation}
which implies that $\log\big(\nr +  \nr (\ntone + \nttwo ) \SNR\,\epsilon^2   \big)$ is bounded. Thus, computing the ratio of the RHS of \eqref{eq:lower_bound_gmi_1_asymptotic_ntonesmall} to $\log\SNR$ in the limit as the $\SNR$ tends to infinity, we obtain the lower bound
\begin{IEEEeqnarray}{lCl}
\Pi_{R^*_1} &  \geq &  \min(\nr,\ntone)\left(1-\frac{\ntone + \nttwo }{L}\right), \quad L\leq\frac{1}{2\lambda_D}.\IEEEeqnarraynumspace\label{eq:firstbound}
\end{IEEEeqnarray}
The condition $L\leq1/(2\lambda_D)$ is necessary since otherwise \eqref{eq:fadingestimate_error} would not hold. This yields one boundary of the pre-log region presented in Theorem \ref{th:JTD_pre-log}.

\subsection*{\underline{Error Event $m_1'=1$, $m_2' \neq 1$}}

This follows from the proof for the error event $m'_1\neq 1, m'_2=1$ by replacing user 1 by user 2. We thus have
\begin{IEEEeqnarray}{lCl}
\Pi_{R^*_2} &  \geq &  \min(\nr,\nttwo)\left(1-\frac{\ntone + \nttwo }{L}\right), \quad L\leq\frac{1}{2\lambda_D}\IEEEeqnarraynumspace\label{eq:secondbound}
\end{IEEEeqnarray}
yielding the second boundary of the pre-log region presented in Theorem~\ref{th:JTD_pre-log}.


\subsection*{\underline{Error Event $m_1' \neq 1, m_2' \neq 1$}}

As above, the GMI corresponding to the event $\mathcal{E}(m'_1, m'_2)$, $m_1' \neq 1, m_2' \neq 1$ can be evaluated as \cite{IEEE:lapidoth:fadingchannels_howperfect, IEEE:weingarten:gaussiancodes}
\begin{equation}
\label{eq:newequation}
 \gmi_{1+2} (\SNR)  = \sup_{\theta \leq 0} \left(\theta F (\SNR) - \kappa_{1,2} (\theta, \SNR) \right)
\end{equation}
where $F(\SNR)$ is given in \eqref{eq:expectation-F}, and where $\kappa_{1,2} (\theta, \SNR)$, given in \eqref{eq:kappa_1_2} on the top of this page, is the conditional log moment-generating function of the metric $D (m_1', m_2')$ associated with $m_1' \neq 1, m_2' \neq 1$, conditioned on the channel outputs and on the fading estimates. 
\addtocounter{equation}{1}

The $\gmi_{1+2} (\SNR)$ can be viewed as the GMI of an $\nr \times (\ntone + \nttwo)$-dimensional MIMO channel with channel matrix $\left(\HRM_{1,k}, \HRM_{2,k} \right)$. Noting that the channel estimator produces the channel-estimate matrix $\left(\hat \HRM_{1,k}^{(T)}, \hat \HRM_{2,k}^{(T)}\right)$, it thus follows from \cite{IEEE:asyhari_etal:nearest_neighbour_ISIT11} that the pre-log
\begin{equation}
\Pi_{R^*_{1+2}} \triangleq \lim_{\SNR \rightarrow \infty} \frac{I^{\rm gmi}_{1+2} (\SNR) }{\log \SNR}
\end{equation}
is lower-bounded by
\begin{IEEEeqnarray}{rCll}
 \Pi_{R_{1+2}^*} & \geq& \min \left( \ntone + \nttwo, \nr \right)  & \left(1 - \frac{\ntone + \nttwo}{L} \right)\label{eq:thirdbound}
\end{IEEEeqnarray}
for $L\leq1/(2\lambda_D)$. This yields the third boundary of the pre-log region presented in Theorem \ref{th:JTD_pre-log}. 

Combining \eqref{eq:firstbound}, \eqref{eq:secondbound} and \eqref{eq:thirdbound}, and noting that the boundary is maximised for $L$ being the largest integer satisfying $L \leq \frac{1}{2\lambda_D}$, proves Theorem~\ref{th:JTD_pre-log}.



\end{document}